\title{Monte Carlo Tree Search guided by Symbolic Advice for MDPs}
\author{Damien Busatto-Gaston}{Université Libre de Bruxelles, Brussels, Belgium }{damien.busatto-gaston@ulb.ac.be}{https://orcid.org/0000-0002-7266-0927}{}%
\author{Debraj Chakraborty}{Université Libre de Bruxelles, Brussels, Belgium }{debraj.chakraborty@ulb.ac.be}{https://orcid.org/0000-0003-0978-4457}{}
\author{Jean-Francois Raskin}{Université Libre de Bruxelles, Brussels, Belgium }{jraskin@ulb.ac.be}{}{}
\authorrunning{D. Busatto-Gaston, D. Chakraborty and J-F. Raskin} %
\keywords{Markov decision process, Monte Carlo tree search, symbolic advice, simulation} %
\newcommand\newmath[2]{\newcommand#1{\ensuremath{#2}\xspace}}
\newcommand\renewmath[2]{\renewcommand#1{\ensuremath{#2}\xspace}}
\newcommand\newmathope[2]{\newcommand#1{\ensuremath{\operatornamewithlimits{#2}}\xspace}}
\newmath{\N}{\mathbb{N}}
\newmath{\Z}{\mathbb{Z}}
\newmath{\Q}{\mathbb{Q}}
\newmath{\R}{\mathbb{R}}
\newmathope{\argmin}{\arg\min}
\newmathope{\argmax}{\arg\max}
\renewmath{\Pr}{\mathbb P}
\newmath{\Dist}{\mathcal D}
\newmath{\Supp}{\mathsf{Supp}}
\newmath{\E}{\mathbb E}
\newmath{\Reward}{\mathsf{Reward}}
\newmath{\AReward}{\mathsf{AReward}}
\newmath{\FPaths}{\mathsf{Paths}}
\newmath{\first}{\mathsf{first}}
\newmath{\last}{\mathsf{last}}
\newmath{\Val}{\mathsf{Val}}
\newmath{\opt}{\mathsf{opt}}
\newmath{\reward}{\mathsf{reward}}
\newmath{\numsamples}{\mathsf{count}} %
\newmath{\children}{\mathsf{children}}
\newmath{\mctsvalue}{\mathsf{value}}
\newmath{\total}{\mathsf{total}}
\newmath{\I}{\mathcal{I}}
\newmath{\iter}{\mathsf{iter}}
\newmath{\A}{\mathscr A}
\newmath{\Apsi}{\A^{\psi}}
\newmath{\Aone}{\Apsi_1}
\newmath{\Aphi}{\A^{\varphi}}
\newmath{\Atwo}{\Aphi_2}
\renewcommand\paragraph[1]{\smallskip\noindent\textbf{#1.}}
\begin{document}

\maketitle

\begin{abstract}
In this paper, we consider the online computation of a strategy that aims
at optimizing the expected average reward in a Markov decision process.
The strategy is computed with a receding horizon and using
Monte Carlo tree search (MCTS). We augment the MCTS algorithm with
the notion of symbolic advice, and show that its classical
theoretical guarantees are maintained. Symbolic advice
are used to bias the selection and simulation
strategies of MCTS.
We describe how to use QBF and SAT solvers to implement symbolic advice
in an efficient way. We illustrate our new algorithm using
the popular game {\sc{Pac-Man}} and show that the performances
of our algorithm exceed those of plain MCTS as well as
the performances of human players.
\end{abstract}

\section{Introduction}\label{sec:intro}

Markov decision processes (MDP) are an important mathematical formalism for modeling and solving sequential decision problems in stochastic environments~\cite{DBLP:books/wi/Puterman94}. The importance of this model has triggered a large number of works in different research communities within computer science, most notably in formal verification, and in artificial intelligence and machine learning. The works done in these research communities have respective weaknesses and complementary strengths. On the one hand, algorithms developed in formal verification are generally complete and provide strong guarantees on the optimality of computed solutions but they tend to be applicable to models of moderate size only. On the other hand, algorithms developed in artificial intelligence and machine learning usually scale to larger models but only provide weaker guarantees.  Instead of opposing the two sets of algorithms, there have been recent works~\cite{DBLP:conf/isola/AshokBKS18,DBLP:conf/atal/HasanbeigAK20,DBLP:conf/atva/BrazdilCCFKKPU14,DBLP:conf/tacas/DacaHKP16,DBLP:conf/aaai/Chatterjee0PRZ17,DBLP:conf/concur/KretinskyPR18,DBLP:conf/aaai/AlshiekhBEKNT18} that try to combine the strengths of the two approaches in order to offer new hybrid algorithms that %
scale better and %
provide stronger guarantees. The contributions described in this paper are part of this research agenda: we show how to integrate {\em symbolic advice} defined by formal specifications into Monte Carlo Tree Search algorithms~\cite{DBLP:journals/tciaig/BrownePWLCRTPSC12} using techniques such as SAT~\cite{DBLP:reference/mc/Marques-SilvaM18} and QBF~\cite{DBLP:conf/ictai/ShuklaBPS19}.

When an MDP is too large to be analyzed {\em offline} using verification algorithms, receding horizon analysis combined with simulation techniques are used {\em online}~\cite{DBLP:journals/ml/KearnsMN02}. Receding horizon techniques work as follows. In the current state $s$ of the MDP, for a fixed horizon $H$, the receding horizon algorithm searches for an action $a$ that is the first action of a plan to act (almost) optimally on the finite horizon $H$. When such an action is identified, then it is played from $s$ and the state evolves stochastically to a new state $s'$ according to the dynamics specified by the MDP. The same process is repeated from $s'$. The optimization criterion over the~$H$ next step depends on the long run measure that needs to be optimised. The tree unfolding from $s$ that needs to be analyzed is often very large (\textit{e.g.}~it may be exponential in $H$). As a consequence, receding horizon techniques are often coupled with {\em sampling techniques} that avoid the systematic exploration of the entire tree unfolding at the expense of approximation. The Monte Carlo Tree Search (MCTS) algorithm \cite{DBLP:journals/tciaig/BrownePWLCRTPSC12} is an increasingly popular tree search algorithm that implements these ideas. It is one of the core building blocks of the {\sc AlphaGo} algorithm~\cite{DBLP:journals/nature/SilverHMGSDSAPL16}.

While MCTS techniques may offer reasonable performances out of the shelf, they usually need substantial adjustments that depend on the application to really perform well. One way to adapt MCTS to a particular application is to {\em bias} the search towards promising subspaces taking into account properties of the application domain~\cite{5695450,silver2009monte}. This is usually done by coding directly handcrafted search and sampling strategies. We show in this paper how to use techniques from formal verification to offer a {\em flexible} and {\em rigorous} framework to bias the search performed by MCTS using {\em symbolic advice}. A symbolic advice is a {\em formal specification}, that can be expressed for example in your favorite linear temporal logic, and which constrain the search and the sampling phases of the MCTS algorithm using QBF and SAT solvers. Our framework offers in principle the ability to easily experiment with precisely formulated bias expressed declaratively using logic.

\paragraph{Contributions} On the theoretical side, we study the impact of using symbolic advice on the guarantees offered by MCTS. We identify sufficient conditions for the symbolic advice to preserve the convergence guarantees of the MCTS algorithm (Theorem~\ref{thm:advMCTS}). Those results are partly based on an analysis of the incidence of sampling on those guarantees (Theorem~\ref{thm:MCTS}) which can be of independent interest.

On a more practical side, we show how symbolic advice can be implemented using SAT and QBF techniques. More precisely, we use QBF~\cite{DBLP:conf/cav/NarodytskaLBRW14} to force that all the prefixes explored by the MCTS algorithm in the partial tree unfolding have the property suggested by the {\em selection advice} (whenever possible) and we use SAT-based sampling techniques~\cite{DBLP:conf/tacas/ChakrabortyFMSV15} to achieve uniform sampling among paths of the MDP that satisfy the {\em sampling advice}. The use of this symbolic exploration techniques is important as the underlying state space that we need to analyze is usually huge (\textit{e.g.}~exponential in the receding horizon $H$).

\begin{figure}
\begin{subfigure}{.45\textwidth}
  \centering
  \includegraphics[height=0.19\textheight,keepaspectratio]{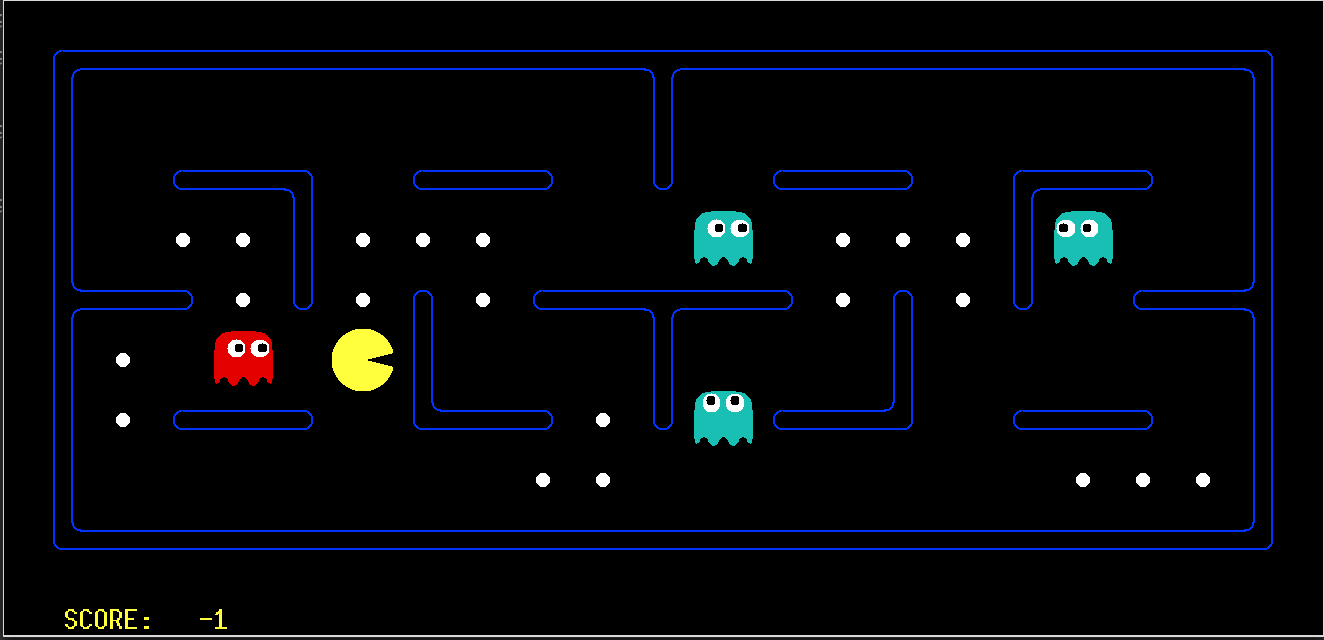}
\end{subfigure}
\begin{subfigure}{.75\textwidth}
  \centering
  \includegraphics[height=0.19\textheight,keepaspectratio]{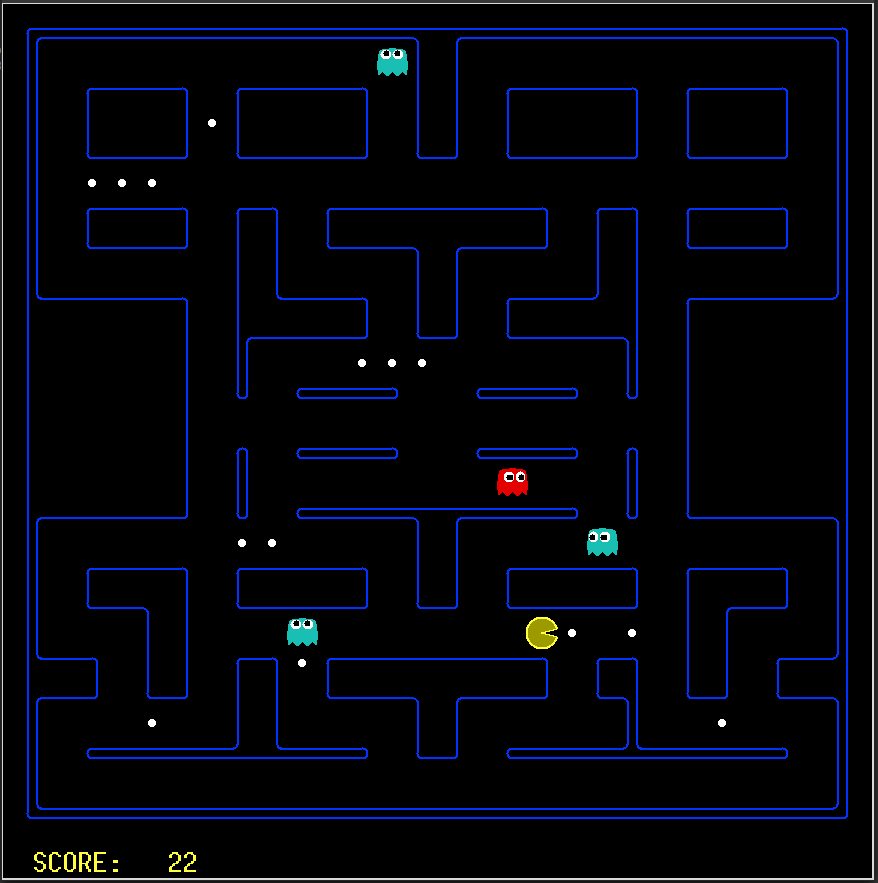}
\end{subfigure}%
\caption{We used two grids of size $9\times 21$ and $27\times 28$ %
for our experiments.
Pac-Man loses if he makes contact with a ghost, and wins if he eats all food pills (in white).
The agents can travel in four directions unless they are blocked by the walls in the grid, and ghosts cannot reverse their direction. The score decreases by $1$ at each step, and increases by $10$ whenever Pac-Man eats a food pill.
A win (resp. loss), increases (resp. decreases) the score by $500$.
The game can be seen as an infinite duration game by saying that whenever Pac-Man wins or loses,
the positions of the agents and of the food pills are reset.
}%
\label{fig:grids}
\end{figure}

To demonstrate the practical interest of our techniques, we have applied our new {\em MCTS with symbolic advice algorithm} to play {\sc{Pac-Man}}. Figure~\ref{fig:grids} shows a grid of the {\sc{Pac-Man}} game. In this version of the classical game, the agent Pac-Man has to eat food pills as fast as possible while avoiding being pinched by ghosts. We have chosen this benchmark to evaluate our algorithm for several reasons. First, the state space of the underlying MDP is way too large for the state-of-the-art implementations of complete algorithms. Indeed, the reachable state space of the small grid shown here has approximately $10^{16}$ states, while the classical grid has approximately $10^{23}$ states. Our algorithm can handle both grids. Second, this application not only allows for comparison between performances obtained from several versions of the MCTS algorithm but also with the performances that humans can attain in this game. In the {\sc{Pac-Man}} benchmark, we show that advice that instructs Pac-Man on the one hand to avoid ghosts at all costs during the selection phase of the MCTS algorithm (enforced whenever possible by QBF) and on the other hand to bias the search to paths in which ghosts are avoided (using uniform sampling based on SAT) allow to attain or surpass human level performances while the standard MCTS algorithm performs much worse.

\paragraph{Related works}
Our analysis of the convergence of the MCTS algorithm with appropriate symbolic advice is based on extensions of analysis results based on bias defined using UCT (bandit algorithms)~\cite{DBLP:conf/ecml/KocsisS06,DBLP:journals/ml/AuerCF02}. Those results are also related to sampling techniques for finite horizon objectives in MDP~\cite{DBLP:journals/ml/KearnsMN02}.

Our concept of selection phase advice is related to the MCTS-minimax hybrid algorithm proposed in~\cite{DBLP:journals/tciaig/BaierW15}. There the selection phase advice is not specified declaratively using logic but encoded directly in the code of the search strategy. No use of QBF nor SAT is advocated there and no use of sampling advice either.
In~\cite{DBLP:conf/aaai/AlshiekhBEKNT18}, the authors provide a general framework to add safety properties to reinforcement learning algorithms via {\em shielding}.
These techniques analyse statically the full state space of the game in order to compute a set of unsafe actions
to avoid. This fits our advice framework, so that such a shield could be used as an online selection advice
in order to combine their safety guarantees with our formal results for MCTS.
More recently, a variation of shielding called {\em safe padding} has been studied in~\cite{DBLP:conf/atal/HasanbeigAK20}. Both works are concerned with reinforcement learning and not with MCTS.
Note that in general multiple ghosts may prevent the existence of a strategy to enforce safety,
\textit{i.e.}~always avoid pincer moves.

Our practical handling of symbolic sampling advice relies on symbolic sampling techniques introduced in~\cite{DBLP:conf/aaai/ChakrabortyFMSV14}, while our handling of symbolic selection advice relies on natural encodings via QBF that are similar to those defined in~\cite{DBLP:conf/cav/NarodytskaLBRW14}.

\section{Preliminaries}\label{sec:prelim}

A \emph{probability distribution} on a finite set $S$ is a function $d:S\to [0,1]$ such that $\sum_{s\in S}d(s)=1$.
We denote the set of all probability distributions on set $S$ by $\Dist(S)$. The support of a distribution $d\in \Dist(S)$ is $\Supp(d)=\{s\in S\mid d(s)>0\}$.

\subsection{Markov decision process}

\begin{definition}[MDP]\label{def:mdp}
A Markov decision process is a tuple $M=(S,A,P,R,R_T)$, where
$S$ is a finite set of states, $A$ is a finite set of actions,
$P$ is a mapping from $S\times A$ to $\Dist(S)$ such that $P(s,a)(s')$
denotes the probability that action $a$ in state $s$ leads to state $s'$,
$R:S\times A\to \R$ defines the reward obtained for taking a given action
at a given state, and $R_T:S\to \R$ assigns a terminal reward to each state in $S$.\footnote{
We assume for convenience that every action in $A$ can be taken from every state.
One may need to limit this choice to a subset of \emph{legal} actions that depends on the current state.
This concept can be encoded in our formalism by adding a sink state reached with probability $1$
when taking an illegal action.}
\end{definition}

For a Markov decision process $M$, a \emph{path} of length $i>0$ is
a sequence of $i$ consecutive states and actions followed by a last state.
We say that $p = s_0a_0s_1\ldots s_i$ is an $i$-length path in the MDP $M$
if for all $t\in[0,i-1]$, $a_t\in A$ and $s_{t+1}\in \Supp(P(s_t,a_t))$,
and we denote $\last(p) = s_i$ and $\first(p) = s_0$.
We also consider states to be paths of length $0$.
An infinite path is an infinite sequence $p = s_0a_0s_1\ldots$ of states and actions
such that for all $t\in\N$, $a_t\in A$ and $s_{t+1}\in \Supp(P(s_t,a_t))$.
We denote the finite prefix of length $t$ of a finite or infinite path $p=s_0a_0s_1\ldots$ by $p_{|t}=s_0a_0\ldots s_t$.
Let $p=s_0a_0s_1\ldots s_i$ and $p'=s'_0a'_0s'_1\ldots s'_j$ be two paths
such that $s_i=s'_0$, let $a$ be an action and $s$ be state of $M$.
Then, $p\cdot p'$
denotes $s_0a_0s_1\ldots s_ia'_0s'_1\ldots s'_j$ and $p\cdot as$ denotes $s_0a_0s_1\ldots s_ias$.

For an MDP $M$, the set of all finite paths of length $i$ %
is denoted by $\FPaths^i_{M}$. %
Let $\FPaths^i_{M}(s)$ denote the set of paths $p$ in $\FPaths^i_{M}$
such that $\first(p)=s$.
Similarly, if $p\in\FPaths^i_{M}$ and $i\leq j$, then let $\FPaths^j_{M}(p)$ denote the set of paths $p'$ in $\FPaths^{j}_{M}$ such that there exists $p''\in\FPaths^{j-i}_M$ with $p'=p\cdot p''$.
We denote the set of all finite paths in $M$ by $\FPaths_{M}$ and the set of finite paths of length
at most $H$ by $\FPaths^{\leq H}_{M}$.

\begin{definition}%
The total reward of a finite path $p = s_0a_0\dots s_n$ in $M$ %
is defined as $$\Reward_{M}(p)=\sum_{t=0}^{n-1}R(s_t,a_t)+R_T(s_n)\,.$$
\end{definition}

A (probabilistic) \emph{strategy} is a function $\sigma : \FPaths_{M} \to \Dist(A)$
that maps a path $p$ to a probability distribution in $\Dist(A)$.
A strategy $\sigma$ is \emph{deterministic} if the support of
the probability distributions $\sigma(p)$ has size $1$,
it is \emph{memoryless} if $\sigma(p)$ depends only on $\last(p)$,
\textit{i.e.}~if $\sigma$ satisfies that for all $p,p'\in\FPaths_{M}$, $\last(p)=\last(p')\Rightarrow\sigma(p)=\sigma(p')$.
For a probabilistic strategy $\sigma$ and $i\in\N$, let $\FPaths^i_{M}(\sigma)$
denote the paths $p = s_0a_0\ldots s_i$ in $\FPaths^i_{M}$
such that for all $t\in[0,i-1]$, $a_t\in \Supp(\sigma(p_{|t}))$.
For a finite path $p$ of length $i\in\N$ and some $j\geq i$,
let $\FPaths^j_{M}(p,\sigma)$ denote $\FPaths^j_{M}(\sigma)\cap\FPaths^j_{M}(p)$.

For a strategy $\sigma$ and a path $p\in\FPaths^i_{M}(\sigma)$,
let the probability of $p=s_0a_0\dots s_i$
in $M$ according to $\sigma$ be defined as $\Pr_{M,\sigma}^i(p)=
\prod_{t=0}^{i-1} \sigma(p_{|t})(a_t) P(s_t,a_t)(s_{t+1})$.
The mapping $\Pr_{M,\sigma}^i$ defines a probability distribution
over $\FPaths^i_{M}(\sigma)$.

\begin{definition}%
The expected average reward of a probabilistic strategy $\sigma$ in an MDP $M$,
starting from state $s$, is defined as
$$\Val_{M}(s,\sigma) = \liminf_{n\to \infty}\cfrac{1}{n}\E\left[\Reward_{M}(p)\right]\,,$$
where $p$ is a random variable over $\FPaths^n_{M}(\sigma)$ following
the distribution $\Pr_{M,\sigma}^n$.
\end{definition}

\begin{definition}%
The optimal expected average reward starting from a state $s$ in an MDP $M$%
is defined over all strategies $\sigma$ in $M$ as
$\Val_{M}(s)=\sup_{\sigma}\Val_{M}(s,\sigma)$.
\end{definition}

One can restrict the supremum to deterministic memoryless
strategies~\cite[Proposition 6.2.1]{DBLP:books/wi/Puterman94}.
A strategy $\sigma$ is called $\epsilon$-optimal for the expected average reward
if $\Val_M(s,\sigma)\geq \Val_M(s)-\epsilon$ for all $s$.

\begin{definition}%
  The expected total reward of a probabilistic strategy $\sigma$ in an MDP $M$,
  starting from state $s$ and for a finite horizon $i$, is defined as
  $\Val_{M}^i(s,\sigma) = \E\left[\Reward_{M}(p)\right]$,
  where $p$ is a random variable over $\FPaths^i_{M}(\sigma)$ following the distribution $\Pr_{M,\sigma}^i$.
\end{definition}

\begin{definition}%
The optimal expected total reward starting from a state $s$ in an MDP $M$, with horizon $i\in\N$,
is defined over all strategies $\sigma$ in $M$ as
$\Val_{M}^i(s) = \sup_{\sigma}\Val_{M}^i(s,\sigma)$.
\end{definition}
One can restrict the supremum to deterministic
strategies~\cite[Theorem 4.4.1.b]{DBLP:books/wi/Puterman94}.

Let $\sigma^{i,*}_{M,s}$ denote a deterministic strategy that
maximises $\Val_M^i(s,\sigma)$, and refer to it as
an optimal strategy for the expected total reward of horizon $i$ at state $s$.
For $i\in \N$, let $\sigma^{i}_{M}$ refer to a deterministic memoryless strategy
that maps every state $s$ in $M$ to the first action of a corresponding optimal strategy for
the expected total reward of horizon $i$, so that $\sigma^{i}_{M}(s)=\sigma^{i,*}_{M,s}(s)$.
As there may exist several deterministic strategies $\sigma$ that
maximise $\Val_M^i(s,\sigma)$, we denote by $\opt^{i}_{M}(s)$
the set of actions $a$ such that there exists an optimal strategy $\sigma^{i,*}_{M,s}$
that selects $a$ from $s$.
A strategy $\sigma^{i}_{M}$ can be obtained by the value iteration algorithm:

\begin{proposition}[{Value iteration~\cite[Section 5.4]{DBLP:books/wi/Puterman94}}]\label{prop:ValueIte}
  For a state $s$ in MDP $M$, for all $i\in\N$,
  \begin{itemize}
    \item $\Val^{i+1}_{M}(s)=\max_{a\in A}\left[R(s,a)+\sum_{s'}P(s,a)(s')\Val^i_{M}(s')\right]$
    \item $\opt^{i+1}_{M}(s)=\argmax_{a\in A}\left[R(s,a)+\sum_{s'}P(s,a)(s')\Val^i_{M}(s')\right]$
  \end{itemize}
\end{proposition}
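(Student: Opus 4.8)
The plan is to prove both identities directly from the definitions, using the fact recalled just before the statement that the supremum defining $\Val^i_M(\cdot)$ is attained on deterministic strategies, together with the decomposition of a horizon-$(i+1)$ path from $s$ into its first transition $s\,a\,s'$ and a horizon-$i$ suffix starting at $s'$. The technical device I would use is the \emph{residual strategy}: given a deterministic strategy $\sigma$ with $\sigma(s)=a$ and a successor $s'$, define $\sigma'$ on paths from $s'$ by $\sigma'(s'\,a_1\,s_2\cdots s_k)=\sigma(s\,a\,s'\,a_1\,s_2\cdots s_k)$. One checks that path probabilities factorise, $\Pr^{i+1}_{M,\sigma}(s a\cdot p')=P(s,a)(s')\,\Pr^{i}_{M,\sigma'}(p')$ for every horizon-$i$ path $p'$ starting at $s'$ (here the determinism of $\sigma$ is used to drop the factor $\sigma(s)(a)=1$), and that rewards split as $\Reward_M(s a\cdot p')=R(s,a)+\Reward_M(p')$. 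Substituting these into the definition of $\Val^{i+1}_M(s,\sigma)$ and grouping the path sum according to the first successor $s'$ yields
$$\Val^{i+1}_M(s,\sigma)=R(s,a)+\sum_{s'}P(s,a)(s')\,\Val^i_M(s',\sigma')\,.$$

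From this identity I would derive the first bullet via two inequalities. For ``$\ge$'', fix an action $a$ and, for each successor $s'$, a deterministic strategy $\tau_{s'}$ from $s'$; gluing them behind $a$ produces a deterministic strategy whose horizon-$(i+1)$ value from $s$ equals $R(s,a)+\sum_{s'}P(s,a)(s')\,\Val^i_M(s',\tau_{s'})$, so taking suprema over the $\tau_{s'}$ (which may be optimised independently, since distinct successors index disjoint sets of suffixes) and then the maximum over $a$ gives the lower bound $\Val^{i+1}_M(s)\ge\max_{a\in A}[R(s,a)+\sum_{s'}P(s,a)(s')\,\Val^i_M(s')]$. For ``$\le$'', any deterministic strategy from $s$ selects some action $a$, and by the displayed identity together with $\Val^i_M(s',\sigma')\le\Val^i_M(s')$ its value is at most $R(s,a)+\sum_{s'}P(s,a)(s')\,\Val^i_M(s')$, hence at most $\max_{b\in A}[R(s,b)+\sum_{s'}P(s,b)(s')\,\Val^i_M(s')]$; taking the supremum over deterministic strategies, which suffices, gives the matching upper bound.

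For the second bullet I would show that $a\in\opt^{i+1}_M(s)$ exactly when $a$ attains the maximum on the right-hand side. If $a$ attains it, then gluing an optimal horizon-$i$ strategy $\sigma^{i,*}_{M,s'}$ behind $a$ at each successor $s'$ produces, by the displayed identity, a deterministic strategy of value $R(s,a)+\sum_{s'}P(s,a)(s')\,\Val^i_M(s')=\Val^{i+1}_M(s)$ that selects $a$ from $s$, so $a\in\opt^{i+1}_M(s)$. Conversely, if a deterministic optimal strategy $\sigma$ selects $a$ from $s$, then $\Val^{i+1}_M(s)=\Val^{i+1}_M(s,\sigma)=R(s,a)+\sum_{s'}P(s,a)(s')\,\Val^i_M(s',\sigma')\le R(s,a)+\sum_{s'}P(s,a)(s')\,\Val^i_M(s')\le\max_{b\in A}[R(s,b)+\sum_{s'}P(s,b)(s')\,\Val^i_M(s')]=\Val^{i+1}_M(s)$, so all inequalities are tight and $a$ attains the maximum.

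I expect the main obstacle to be the bookkeeping in the residual-strategy decomposition: checking that $\sigma'$ is defined on exactly the suffixes consistent with $\sigma$ after the prefix $s a s'$, that the probability factorisation is precisely correct (this is where determinism of $\sigma$ matters, to avoid carrying a spurious $\sigma(s)(a)$ factor), and that regrouping the finite path sum by first successor is a genuine bijection up to zero-probability paths. Once the displayed identity is in place, the rest is linearity of expectation and elementary manipulation of finite sums and suprema; note that no induction on $i$ is needed, since the identity relates $\Val^{i+1}_M$ only to the already-defined quantity $\Val^i_M$.
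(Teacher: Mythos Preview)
The paper does not give its own proof of this proposition: it is stated with a citation to Puterman~\cite[Section 5.4]{DBLP:books/wi/Puterman94} and treated as a standard background result. Your argument is the classical dynamic-programming derivation of the Bellman optimality equation for finite-horizon total reward, and it is correct; the residual-strategy decomposition, the two inequalities for the first bullet, and the double inclusion for the second bullet all go through as you describe. One small point of presentation: the residual strategy $\sigma'$ depends on the chosen successor $s'$, so it would be cleaner to write $\sigma'_{s'}$ throughout (you already rely on this when you speak of optimising the $\tau_{s'}$ independently). Otherwise there is nothing to compare, since the paper simply defers to the reference.
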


Moreover, for a large class of MDPs and a large enough $n$, the strategy $\sigma^{n}_{M}$ is $\epsilon$-optimal for the expected average reward: %
\begin{proposition}\cite[Theorem 9.4.5]{DBLP:books/wi/Puterman94}\label{prop:ValItCon}
  For a strongly aperiodic\footnote{A Markov decision process is strongly aperiodic if $P(s,a)(s)>0$
  for all $s\in S$ and $a\in A$.} Markov decision process $M$, %
  it holds that $\Val_M(s)=\lim_{n\to\infty}[\Val^{n+1}_M(s)-\Val^n_M(s)]$.
  Moreover, for any $\epsilon>0$ there exists $N\in\N$ such that for all $n\geq N$, $Val_M(s,\sigma^n_M)\geq Val_M(s)-\epsilon$ for all $s$.
\end{proposition}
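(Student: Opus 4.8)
The plan is to reduce both claims to contraction properties of the value iteration (Bellman) operator $L$ acting on $\R^S$ by $Lv(s)=\max_{a\in A}\big[R(s,a)+\sum_{s'}P(s,a)(s')\,v(s')\big]$. By Proposition~\ref{prop:ValueIte}, iterating $L$ from the terminal-reward vector $v_0=R_T$ produces exactly $v_n:=\Val^n_M$, and the greedy actions for $v_n$ are precisely those in $\opt^{n+1}_M(s)$; in particular a deterministic memoryless strategy that is greedy for $v_n$ is an instance of $\sigma^{n+1}_M$. So everything hinges on the asymptotics of $v_{n+1}-v_n=Lv_n-v_n$.

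First I would work with the span seminorm $\mathrm{sp}(v)=\max_s v(s)-\min_s v(s)$, using that $L$ is monotone and non-expansive for the supremum norm and satisfies $L(v+ce)=Lv+ce$ for a constant shift $ce$. The key step is to show that strong aperiodicity makes some iterate $L^{J}$ a span contraction: there is $\gamma<1$ with $\mathrm{sp}(L^{J}u-L^{J}v)\le\gamma\,\mathrm{sp}(u-v)$ for all $u,v\in\R^S$, the point being that $P(s,a)(s)>0$ for every state and action forces the $J$-step kernels induced by any two histories to retain a common amount of mass on a shared state, producing a Doeblin-type coefficient bounded away from $1$. Applying this with $u=v_{n+1}$ and $v=v_n$ shows $(v_{n+1}-v_n)_n$ is Cauchy for $\mathrm{sp}$, and combining with boundedness of the sequence (from non-expansiveness of $L$ and finiteness of $R,R_T$) upgrades this to genuine convergence $v_{n+1}-v_n\to g$ for some $g\in\R^S$.

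Next I would identify $g$ with the optimal average reward. Passing to the limit in $v_{n+1}=Lv_n$ shows that $g$, together with a limiting bias vector, solves the average-reward optimality equations, so $g$ is the optimal gain vector of $M$; the classical characterisation of average-reward MDPs --- the same one underlying the reduction to deterministic memoryless strategies recalled after the definition of $\Val_M(s)$ --- then gives $\Val_M(s)=g(s)$ for every $s$, which is the first assertion. For the second assertion I would use the classical value-iteration stopping rule: if $\mathrm{sp}(Lv_n-v_n)<\epsilon$ then any stationary strategy greedy for $v_n$ --- in particular $\sigma^{n+1}_M$ --- has gain within $\epsilon$ of $\Val_M$ at every state, since $\min_s(Lv_n-v_n)(s)$ and $\max_s(Lv_n-v_n)(s)$ sandwich both $g(s)$ and the gain of that greedy strategy. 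Since $\mathrm{sp}(Lv_n-v_n)=\mathrm{sp}(v_{n+1}-v_n)\to 0$, there is $N$ with this property for all $n\ge N$, and a harmless shift of index yields the stated $\epsilon$-optimality of $\sigma^n_M$.

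The step I expect to be the main obstacle is the span-contraction claim: strong aperiodicity as defined here only removes periodicity, and one still needs enough communication among states for $L^{J}$ to be a genuine contraction rather than merely non-expansive, and for $g$ to be well defined and to coincide with $\Val_M$ in the possibly multichain case; controlling the recurrent-class structure is exactly the technical content of Puterman's Theorem~9.4.5, on which we ultimately rely. A shortcut avoiding a from-scratch proof is to cite that theorem directly, noting that its aperiodicity hypothesis is implied by strong aperiodicity and that any average-reward MDP can be turned into a strongly aperiodic one by the standard aperiodicity transformation, which changes neither $\Val_M$ nor the strategies $\sigma^n_M$.
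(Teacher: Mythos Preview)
The paper does not prove this proposition at all: it is simply quoted from Puterman with the citation \cite[Theorem 9.4.5]{DBLP:books/wi/Puterman94}, and the surrounding text only uses the statement as a black box to justify the receding-horizon approach. So there is no ``paper's own proof'' to compare against; your proposal already goes well beyond what the paper does.

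As a sketch of how such a theorem is actually proved, your outline is the standard one (span-seminorm contraction of the Bellman operator, convergence of $v_{n+1}-v_n$, identification of the limit via the average-reward optimality equations, and the stopping-rule argument for $\epsilon$-optimality of the greedy policy). One point deserves correction, though: your justification of the span-contraction step is not right. Strong aperiodicity as defined here only guarantees $P(s,a)(s)>0$, i.e.\ a self-loop at every state; this gives the $J$-step kernel from $s$ mass on $s$ and the $J$-step kernel from $s'$ mass on $s'$, but produces \emph{no} overlap on a common state, so no Doeblin coefficient bounded away from $1$ follows. A two-state MDP with each state absorbing under every action is strongly aperiodic, yet no iterate $L^{J}$ is a span contraction there. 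You essentially acknowledge this in your last paragraph, and your fallback---invoke Puterman's theorem directly, noting that strong aperiodicity supplies its aperiodicity hypothesis and that the aperiodicity transformation (detailed in the paper's Appendix~\ref{app:mdp}) preserves $\Val_M$ and the strategies $\sigma^n_M$---is exactly what the paper does.
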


A simple transformation can be used to make an MDP strongly aperiodic
without changing the optimal expected average reward and the associated optimal strategies.
Therefore, one can use an algorithm computing the strategy $\sigma^H_M$
in order to optimise for the expected average reward, and obtain theoretical
guarantees for a horizon $H$ big enough. This is known as the \emph{receding horizon} approach.

Finally, we will use the notation $T(M,s_0,H)$ to refer to an MDP obtained as a
tree-shaped \emph{unfolding} of $M$ from state $s_0$ and for a depth of $H$.
In particular, the states of $T(M,s_0,H)$ correspond to paths in $\FPaths^{\leq H}_{M}(s_0)$.
Then, it holds that:
\begin{lemma}\label{lm:unfolding}
  $\Val^{H}_{M}(s_0)$ is equal to $\Val^{H}_{T(M,s_0,H)}(s_0)$,
  and $\opt^H_M(s_0)$ is equal to $\opt^H_{T(M,s_0,H)}(s_0)$.
\end{lemma}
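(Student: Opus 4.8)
The plan is to prove a slightly stronger, ``local'' statement by induction on the horizon and then specialise it to the root. First I would pin down the standard definition of the unfolding $T(M,s_0,H)$: its states are the paths $p\in\FPaths^{\leq H}_{M}(s_0)$, the root being the length-$0$ path $s_0$; from a state $p$ with $|p|<H$ and an action $a$, the successor is $p\cdot a s'$ with probability $P(\last(p),a)(s')$, the associated reward is $R(\last(p),a)$, and the terminal reward of a state $p$ is $R_T(\last(p))$; states $p$ with $|p|=H$ are leaves, which we make absorbing with zero reward as allowed by the footnote of Definition~\ref{def:mdp}. The statement I would establish, by induction on $i$ for $0\leq i\leq H$, is: for every state $p$ of $T(M,s_0,H)$ with $|p|=H-i$ we have $\Val^i_{T(M,s_0,H)}(p)=\Val^i_M(\last(p))$, and moreover, when $i\geq 1$, $\opt^i_{T(M,s_0,H)}(p)=\opt^i_M(\last(p))$. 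The point is that $\Val^i_M(\last(p))$ and $\opt^i_M(\last(p))$ depend on $p$ only through $\last(p)$, which is exactly what lets the history-carrying states of the tree be matched to states of $M$.

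For the base case $i=0$, the horizon-$0$ value of a state $s$ of any MDP is $R_T(s)$ (the single length-$0$ path is $s$ itself), so $\Val^0_{T(M,s_0,H)}(p)=R_T(\last(p))=\Val^0_M(\last(p))$. For the inductive step I would fix $i\geq 1$ and a state $p$ with $|p|=H-i<H$, and observe that, by construction, for each action $a$ the successors of $p$ in $T(M,s_0,H)$ are exactly the states $p\cdot a s'$ with $s'\in\Supp(P(\last(p),a))$, each reached with probability $P(\last(p),a)(s')$, and each lying at depth $H-(i-1)$. Plugging this into the value-iteration recurrence of Proposition~\ref{prop:ValueIte} for $T(M,s_0,H)$ and then applying the induction hypothesis $\Val^{i-1}_{T(M,s_0,H)}(p\cdot a s')=\Val^{i-1}_M(s')$ gives $\Val^i_{T(M,s_0,H)}(p)=\max_{a\in A}\big[R(\last(p),a)+\sum_{s'}P(\last(p),a)(s')\Val^{i-1}_M(s')\big]$, which by the same proposition applied to $M$ equals $\Val^i_M(\last(p))$; since the bracketed expressions coincide term by term, so do their $\argmax$ over $a$, giving $\opt^i_{T(M,s_0,H)}(p)=\opt^i_M(\last(p))$.

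Instantiating the claim at $i=H$ and at the root $p=s_0$ (for which $\last(s_0)=s_0$) then yields $\Val^H_{T(M,s_0,H)}(s_0)=\Val^H_M(s_0)$ and $\opt^H_{T(M,s_0,H)}(s_0)=\opt^H_M(s_0)$, as required. An alternative, more semantic route would be to exhibit the evident length-, probability- and reward-preserving bijection between finite paths of $M$ from $s_0$ truncated at length $H$ and finite paths of $T(M,s_0,H)$ from its root, lift it to a correspondence between strategies that preserves both the horizon-$H$ total reward and the first action played, and conclude by taking suprema on both sides; but the inductive argument above is shorter once Proposition~\ref{prop:ValueIte} is available. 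I do not expect a real obstacle here: the only points needing care are fixing the definition of the unfolding precisely (especially the depth-$H$ leaves) so that value iteration applies verbatim, and checking that an action's successors of $p$ in the tree are in bijection, with identical transition probabilities, with the successors of $\last(p)$ in $M$; both are routine.
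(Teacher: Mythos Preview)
Your proposal is correct and follows essentially the same approach as the paper: an induction on the remaining horizon showing $\Val^{H-i}_{T(M,s_0,H)}(p)=\Val^{H-i}_{M}(\last(p))$ (and the analogous statement for $\opt$) for every node $p$ at depth $i$, with base case at the leaves and inductive step via Proposition~\ref{prop:ValueIte}. The only cosmetic difference is indexing (you induct on $i$ with $|p|=H-i$, the paper inducts on $H-i$ with $p\in S_i$), and the paper does not spell out the alternative semantic bijection you mention.
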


The aperiodicity and unfolding transformations are detailed in Appendix~\ref{app:mdp}.

\subsection{Bandit problems and UCB}\label{sec:UCB}

In this section, we present bandit problems, whose study
forms the basis of a theoretical analysis of Monte Carlo tree search algorithms.

Let $A$ denote a finite set of actions.
For each $a\in A$, let $(x_{a,t})_{t\geq 1}$
be a sequence of random payoffs associated to $a$.
They correspond to successive plays of action $a$, and for
every action $a$ and every $t\geq 1$, let $x_{a,t}$ be drawn
with respect to a probability distribution $\Dist_{a,t}$ over $[0,1]$.
We denote by $X_{a,t}$ the random variable associated to this drawing.
In a \emph{fixed distributions} setting (the classical bandit problem),
every action is associated to a fixed probability distribution $\Dist_a$,
so that $\Dist_{a,t}=\Dist_a$ for all $t\geq 1$.

The \emph{bandit problem} consists of a succession of steps where
the player selects an action and observes the associated payoff,
while trying to maximise the cumulative gains.
For example, selecting action $a$, then $b$ and then $a$ again
would yield the respective payoffs $x_{a,1}$, $x_{b,1}$ and $x_{a,2}$
for the first three steps, drawn from their respective distributions.
Let the regret $R_n$ denote the difference,
after $n$ steps, between the optimal
expected payoff $\max_{a\in A} \E[\sum_{t=1}^n X_{a,t}]$ and the expected payoff
associated to our action selection.
The goal is to minimise the long-term regret when the number of steps $n$ increases.

The algorithm UCB1 of \cite{DBLP:journals/ml/AuerCF02} offers a practical solution to this problem,
and offers theoretical guarantees. %
For an action $a$ and $n\geq 1$, let $\overline x_{a,n}=\frac{1}{n}\sum_{t=1}^{n}x_{a,t}$ denote
the average payoff obtained from the first $n$ plays of $a$.
Moreover, for a given step number $t$ let $t_a$ denote how many times action $a$
was selected in the first $t$ steps.
The algorithm UCB1 chooses, at step $t+1$, the action $a$ that maximises
  $\overline x_{a,t_a}+c_{t,t_a}$,
where
$c_{t,t_a}$ is defined as $\sqrt{\frac{2\ln t}{t_a}}$.
This procedure enjoys optimality guarantees detailed in \cite{DBLP:journals/ml/AuerCF02},
as it keeps the regret $R_n$ below $O(\log n)$.

We will make use of an extension of these results to the general setting
of \emph{non-stationary} bandit problems,
where the distributions $\Dist_{a,t}$ are no longer fixed with respect to $t$.
This problem has been studied in \cite{DBLP:conf/ecml/KocsisS06}, and results were obtained for
a class of distributions $\Dist_{a,t}$ that respect assumptions
referred to as \emph{drift conditions}.

For a fixed $n\geq 1$, let $\overline X_{a,n}$ denote the random variable
obtained as the average of the random variables associated with the first $n$
plays of $a$.
Let $\mu_{a,n}=\E[\overline X_{a,n}]$.
We assume that these expected means eventually converge,
and let $\mu_a=\lim_{n\to\infty}\mu_{a,n}$.

\begin{definition}[Drift conditions]\label{def:drift}
For all $a\in A$, the sequence $(\mu_{a,n})_{n\geq 1}$ converges to some value $\mu_a$.
Moreover, there exists a constant $C_p>0$ and an integer $N_p$ such that for $n\geq N_p$ and any $\delta>0$, if $\Delta_n(\delta)=C_p\sqrt{n\ln(1/\delta)}$ then
the tail inequalities
$\Pr[n\overline X_{a,n}\geq n \mu_{a,n} +\Delta_n(\delta)]\leq \delta$
and $\Pr[n\overline X_{a,n}\leq n \mu_{a,n} -\Delta_n(\delta)]\leq \delta$ hold.
\end{definition}

We recall in Appendix~\ref{app:UCB} the results of \cite{DBLP:conf/ecml/KocsisS06},
and provide an informal description of those results here.
Consider using the algorithm UCB1 on a non-stationary bandit problem satisfying the drift conditions,
with $c_{t,t_a}=2C_p\sqrt{\frac{\ln t}{t_a}}$.
First, one can bound logarithmically the number of times a suboptimal action is played.
This is used to bound the difference between $\mu_a$ and $\E[\overline X_{n}]$ by $O(\ln n/n)$,
where $a$ is an optimal action and where $\overline X_{n}$ denotes
the global average of payoffs received over the first $n$ steps.
This is the main theoretical guarantee obtained for the optimality of UCB1.
Also for any action $a$, the authors state a lower bound for the number of times the action is played.
The authors also prove a tail inequality similar to the one described
in the drift conditions, but on the random variable $\overline X_{n}$
instead of $\overline X_{a,n}$. This will be useful for inductive proofs later on,
when the usage of UCB1 is nested so that the global sequence $\overline X_{n}$
corresponds to a sequence $\overline X_{b,n}$ of an action $b$ played from the next state of the MDP. %
Finally, it is shown that the probability of making the wrong decision
(choosing a suboptimal action) converges to $0$ as the number of plays $n$ grows large enough.

\section{Monte Carlo tree search with simulation}\label{sec:MCTSSimulation}
In a receding horizon approach, the objective is to compute $\Val_M^H(s_0)$ and $\sigma^H_M$ for some
state $s_0$ and some horizon $H$. Exact procedures such as the recursive computation
of Proposition~\ref{prop:ValueIte} can not be used on large MDPs, resulting in heuristic approaches.
We focus on the Monte Carlo Tree Search (MCTS)
algorithm~\cite{DBLP:journals/tciaig/BrownePWLCRTPSC12},
that can be seen as computing approximations of $\Val^H_{M}$ and $\sigma^H_{M}(s_0)$
on the unfolding $T(M,s_0,H)$.
Note that rewards in the MDP $M$ are bounded.\footnote{There are
finitely many paths of length at most $H$, with rewards in $\R$.}
For the sake of simplicity we assume without loss of generality that for all paths $p$
of length at most $H$ the total reward $\Reward_M(p)$ belongs to $[0,1]$.

Given an initial state $s_0$, MCTS is an iterative process that incrementally
constructs a search tree rooted at $s_0$
describing paths of $M$ and their associated values.
This process goes on until a specified budget
(of number of iterations or time) is exhausted.
An iteration constructs a path in $M$ by following a decision strategy
to \emph{select} a sequence of nodes in the search tree. When a node that is not
part of the current search tree is reached, the tree is expanded with this new node,
whose expected reward is approximated by \emph{simulation}.
This value is then used to update the knowledge of all selected nodes in \emph{backpropagation}.

In the search tree, each node represents a path.
For a node $p$ and an action $a\in A$, let $\children(p,a)$ be a list of nodes
representing paths of the form $p\cdot a s'$ where
$s' \in \Supp(P(\last(p),a))$.
For each node (resp. node-action pair) we store %
  a value $\mctsvalue(p)$ (resp. $\mctsvalue(p,a)$) computed for node $p$ (resp. for playing $a$ from node $p$), meant to approximate $\Val_M^{H-|p|}(\last(p))$ (resp. $R(\last(p),a)+\sum_{s'}P(\last(p),a)(s')\Val^{H-|p|-1}_{M}(s')$),
  and a counter $\numsamples(p)$ (resp. $\numsamples(p,a)$), that keeps track
  of the number of iterations that selected node~$p$ (resp. that selected the action $a$ from $p$).
We add subscripts $i\geq 1$ to these notations to denote the number of previous iterations,
so that $\mctsvalue_i(p)$ is the value of $p$ obtained after $i$ iterations of MCTS,
among which $p$ was selected $\numsamples_i(p)$ times.
We also define $\total_i(p)$ and $\total_i(p,a)$ as shorthand for
respectively $\mctsvalue_i(p)\times\numsamples_i(p)$
and $\mctsvalue_i(p,a)\times\numsamples_i(p,a)$.
Each iteration consists of three phases. Let us describe these phases at iteration number $i$.

\paragraph{Selection phase} %
  Starting from the root node, MCTS descends through the existing search tree
  by choosing actions based on the current values and counters
  and by selecting next states stochastically according to the MDP.
  This continues until reaching a node $q$, either outside of the search tree
  or at depth $H$. %
  In the former case,
  the simulation phase is called to obtain a value $\mctsvalue_i(q)$
  that will be backpropagated along the path $q$. In the latter case,
  we use the exact value $\mctsvalue_i(q)=R_T(\last(q))$ instead.

  The action selection process
  needs to balance between the exploration of new paths
  and the exploitation of known, promising paths. %
  A popular way to balance both
  is the \emph{upper confidence bound for trees} (UCT)
  algorithm~\cite{DBLP:conf/ecml/KocsisS06},
  that
  interprets the action selection problem of each node of the MCTS tree
  as a bandit problem, and
  selects an action~$a^*$ in the set
    $\argmax_{a\in A}
    \left[
    \mctsvalue_{i-1}(p,a)
    +C
    \sqrt{\frac{\ln\left(\numsamples_{i-1}(p)\right)}
    {\numsamples_{i-1}(p, a)}}
    \right]$, for some constant $C$.

\paragraph{Simulation phase}
  In the simulation phase, the goal is to get an \emph{initial approximation} for the value of a node $p$, that will be refined in future iterations of MCTS.
  Classically, a sampling-based approach can be used, where one
  computes a fixed number $c\in\N$ of paths
  $p\cdot p'$ in $\FPaths_M^{H}(p)$.
  Then, one can compute $\mctsvalue_i(p) = \frac{1}{c}\sum_{p'}\Reward_M(p')$,
  and fix $\numsamples_i(p)$ to $1$.
  Usually, the samples are derived by selecting actions uniformly at random in the MDP.

  In our theoretical analysis of MCTS, we take a \emph{more general approach} to the simulation phase,
  defined by a finite domain $I\subseteq[0,1]$ and a function $f:\FPaths^{\leq H}_M\to\Dist(I)$
  that maps every path $p$ to a probability distribution on $I$.
  In this approach, the simulation phase simply draws
  a value $\mctsvalue_i(p)$ at random according to the distribution $f(p)$,
  and sets $\numsamples_i(p)=1$.

\paragraph{Backpropagation phase} From the value $\mctsvalue_{i}(p)$ obtained
  at a leaf node $p=s_0a_0s_1\dots s_h$ at depth $h$ in the search tree,
  let $\reward_i(p_{|k})=\sum_{l=k}^{h-1}R(s_l,a_l) + \mctsvalue_{i}(p)$ denote
  the reward associated with the path from node $p_{|k}$ to $p$ in the search tree.
  For $k$ from $0$ to $h-1$ we update the values according to
  $\mctsvalue_{i}(p_{|k})=\frac{\total_{i-1}(p_{|k}) + \reward_i(p_{|k})}{\numsamples_{i}(p_{|k})}\,.$
  The value $\mctsvalue_{i}(p_{|k},a_{k})$ is updated based on $\total_{i-1}(p_{|k},a_{k})$,
  $\reward_i(p_{|k})$ and $\numsamples_{i}(p_{|k},a_k)$ with the same formula.

\paragraph{Theoretical analysis} In the remainder of this section, we prove Theorem~\ref{thm:MCTS}, that provides
theoretical properties of
the MCTS algorithm with a general simulation phase (defined by some fixed $I$ and $f$).
This theorem was proven in \cite[Theorem 6]{DBLP:conf/ecml/KocsisS06} for a version
of the algorithm that called MCTS recursively until leaves were reached,
as opposed to the sampling-based approach that has become standard
in practice.
Note that sampling-based approaches are captured by our general description
of the simulation phase. Indeed, if the number of samples $c$ is set to $1$,
let $I$ be the set of rewards associated with paths of $\FPaths^{\leq H}_M$,
and let $f(p)$ be a probability distribution over $I$, such that for every reward $\Reward_M(p')\in I$,
$f(p)(\Reward_M(p'))$ is the probability of path $p'$ being selected with
a uniform action selections in $T(M,s_0,H)$, starting from the node $p$.
Then, the value $\mctsvalue_i(p)$ drawn
at random according to the distribution $f(p)$ corresponds to the reward of a random sample $p\cdot p'$
drawn in $\FPaths^{H}_M$.
If the number of samples $c$ is greater than $1$,
one simply needs to extend $I$ to be the set of average rewards
over $c$ paths, while $f(p)$ becomes a distribution over average rewards.

\begin{theorem}\label{thm:MCTS}%
  Consider an MDP $M$, a horizon $H$ and a state $s_0$.
  Let $V_n(s_0)$ be a random variable that represents the value $\mctsvalue_n(s_0)$ at the root
  of the search tree after $n$ iterations of the MCTS algorithm on $M$.
  Then, $|\E[V_n(s_0)]-\Val^H_{M}(s_0)|$ is bounded by $\mathcal O((\ln n)/n)$.
  Moreover, the failure probability
  $\Pr[\argmax_a\mctsvalue_n(s_0,a)\not\subseteq \opt^H_{M}(s_0)]$ converges to zero.
\end{theorem}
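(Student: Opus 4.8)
The plan is to run an induction on the remaining horizon $D=H-|p|$ over the nodes $p$ of the tree unfolding $T(M,s_0,H)$, which by Lemma~\ref{lm:unfolding} shares with $M$ the finite-horizon optimal values and the sets of optimal actions. Each node $p$ of depth $<H$ is viewed, from its second visit onwards, as a non-stationary bandit problem in the sense of Definition~\ref{def:drift}, whose arms are the actions of $A$: the payoff of arm $a$ on its $k$-th pull is $R(\last(p),a)$ plus the value returned by the $k$-th visit to the child $p\cdot as'$, where $s'\in\Supp(P(\last(p),a))$ is freshly sampled at each pull; on its very first visit $p$ instead receives a value $X_{p,1}\in[0,1]$ drawn from $f(p)$. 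Writing $V_m(p)$ for the running average of the values backpropagated through $p$ after $m$ visits and $\mu_{p,m}=\E[V_m(p)]$, I would prove, for every node $p$ of remaining horizon $D$: (i) the sequence of values backpropagated through $p$ satisfies the drift conditions with limit $\Val^D_M(\last(p))$; (ii) $|\mu_{p,m}-\Val^D_M(\last(p))|=\mathcal O((\ln m)/m)$; and, when $D\geq 1$, (iii) $\Pr[\argmax_a\mctsvalue(p,a)\not\subseteq\opt^D_M(\last(p))]$ tends to $0$ as the number of visits to $p$ grows. Instantiating this at the root $p=s_0$ with $D=H$ gives the statement, since every iteration visits $s_0$, so after $n$ iterations $V_n(s_0)$ is exactly the value of $s_0$ after $n$ visits.

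\textbf{Base case.}
If $D=0$ then $p$ lies at depth $H$: every visit sets $\mctsvalue(p)=R_T(\last(p))=\Val^0_M(\last(p))$ deterministically, so the drift conditions hold trivially (the deviation from the mean is always $0$), (ii) is immediate, and there is no action to choose so (iii) is vacuous.

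\textbf{Inductive step.}
Fix $p$ with remaining horizon $D\geq 1$ and assume (i)--(iii) for all its children, which have remaining horizon $D-1$. First I would show that, for a fixed arm $a$, mixing the children $p\cdot as'$ according to the fixed weights $P(\last(p),a)(s')$ — each child being, by the induction hypothesis, a process satisfying the drift conditions with limit $\Val^{D-1}_M(s')$, and each being visited unboundedly often — again yields a process satisfying the drift conditions, now with limit $R(\last(p),a)+\sum_{s'}P(\last(p),a)(s')\Val^{D-1}_M(s')$. Feeding this bandit at $p$ into the UCT analysis of \cite{DBLP:conf/ecml/KocsisS06} recalled in Appendix~\ref{app:UCB} then yields: the global average over $p$'s visits converges in expectation, at rate $\mathcal O((\ln m)/m)$, to $\max_a[R(\last(p),a)+\sum_{s'}P(\last(p),a)(s')\Val^{D-1}_M(s')]$, which equals $\Val^D_M(\last(p))$ by Proposition~\ref{prop:ValueIte}; that average again satisfies the drift tail inequality; and the probability of ending up with an arm outside $\argmax_a[R(\last(p),a)+\sum_{s'}P(\last(p),a)(s')\Val^{D-1}_M(s')]=\opt^D_M(\last(p))$ vanishes. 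It remains to incorporate the single initial value $X_{p,1}\sim f(p)$: since $X_{p,1}\in[0,1]$, replacing the average of $m$ visits by the average of visits $2,\dots,m$ changes it by at most $1/m$, which only perturbs the drift constants $C_p,N_p$ and leaves every limit and rate intact — this is the sole place where the sampling-based simulation phase departs from the exact recursion of \cite{DBLP:conf/ecml/KocsisS06}, and the point is exactly that a single bounded initial payoff is harmless. This closes (i)--(iii) for $p$.

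\textbf{Main obstacle.}
The hard part will be the mixing/nesting step: proving that combining, through $P$, several non-stationary bandits that satisfy the drift conditions — under a pull count that is itself governed by UCT — produces a new sequence that still satisfies the drift conditions with the expected limit and the $\mathcal O((\ln m)/m)$ rate, so that the induction can transport the tail inequalities up through all $H$ levels. This is the technical core of \cite{DBLP:conf/ecml/KocsisS06}; the only genuinely new verification is that prepending a single $[0,1]$-valued simulation payoff drawn from $f(p)$ does not disturb it, which reduces to the $1/m$ perturbation bound above.
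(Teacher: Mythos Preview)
Your plan is correct and matches the paper's own argument: a bottom-up induction on the remaining horizon that treats each internal node as a non-stationary bandit, invokes the Kocsis--Szepesv\'ari results of Appendix~\ref{app:UCB} to propagate the drift conditions, the $\mathcal O((\ln m)/m)$ bias bound, and the vanishing failure probability up the tree, and isolates the first-visit simulation draw from $f(p)$ as the single new verification (packaged in the paper as Lemma~\ref{lm:MAB}). The only cosmetic difference is that the paper absorbs the simulation draw into the drift constant via a one-sample Hoeffding bound on $\reward_{\I_1(p)}(p)$ rather than your deterministic $1/m$ perturbation of the running average, but both devices serve the same purpose.
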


Following the proof scheme of \cite[Theorem 6]{DBLP:conf/ecml/KocsisS06},
this theorem is obtained from the results mentioned in Section~\ref{sec:UCB}.
To this end, every node $p$ of the search tree is considered to be an
instance of a bandit problem with non-stationary distributions.
Every time a node is selected, a step is processed in the corresponding
bandit problem.

Let $(\I_i(p))_{i\geq 1}$ be a sequence of iteration numbers for the MCTS algorithm
that describes when the node $p$ is selected, so that the simulation phase was used on $p$
at iteration number $\I_1(p)$, and so that the $i$-th selection
of node $p$ happened on the iteration number $\I_{i}(p)$.
We define sequences $(\I_i(p,a))_{i\in\N}$ similarly for node-action pairs.

For all paths $p$ and actions $a$, a payoff sequence $(x_{a,t})_{t\geq 1}$
of associated random variables $(X_{a,t})_{t\geq 1}$
is defined by $x_{a,t}=\reward_{\I_t(p,a)}(p)$.
Note that in the selection phase at iteration number $\I_t(p,a)$, $p$ must have been selected
and must be a prefix of length $k$ of the leaf node $p'$ reached in this iteration,
so that $\reward_{\I_t(p,a)}(p)$ is computed as $\reward_{\I_t(p,a)}(p'_{|k})$ in the backpropagation phase.
According to the notations of Section~\ref{sec:UCB}, for all $t\geq 1$ we have
$\numsamples_{\I_t(p)}(p)=t$, %
$\numsamples_{\I_t(p)}(p,a)=t_a$ and $\mctsvalue_{\I_t(p)}(p,a)=\overline x_{a,t_a}$.

Then, one can obtain Theorem~\ref{thm:MCTS}
by applying inductively the UCB1 results recalled in Appendix~\ref{app:UCB}
on the search tree in a bottom-up fashion.
Indeed, as the root $s_0$ is selected at every iteration, $\I_n(s_0)=n$
and $\mctsvalue_n(s_0)=\overline x_{n}$, while $\Val^H_{M}(s_0)$
corresponds to recursively selecting optimal actions by Proposition~\ref{prop:ValueIte}.

The main difficulty, and the difference our simulation phase brings compared with
the proof of \cite[Theorem 6]{DBLP:conf/ecml/KocsisS06}, lies in
showing that our payoff sequences $(x_{a,t})_{t\geq 1}$,
defined with an initial simulation step, still satisfy the drift conditions
of Definition~\ref{def:drift}.
We argue that this is true for all simulation phases defined by any $I$ and $f$:
\begin{lemma}\label{lm:MAB}
  For any MDP $M$, horizon $H$ and state $s_0$, %
  the sequences $(X_{a,t})_{t\geq 1}$
  satisfy the drift conditions.
\end{lemma}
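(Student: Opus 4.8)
The plan is to establish the drift conditions of Definition~\ref{def:drift} for each payoff sequence $(X_{a,t})_{t\geq 1}$ by induction on the depth $H-|p|$ of the node $p$ in the search tree, proceeding bottom-up. For a node $p$ at depth $H$ there is nothing to prove since the value is the deterministic constant $R_T(\last(p))$; the interesting case is a node $p$ of depth $h<H$ whose children have already been shown to satisfy the drift conditions. The key observation is that the payoff $x_{a,t}=\reward_{\I_t(p,a)}(p)$ decomposes as $R(\last(p),a)$ plus the payoff received at the child node $p\cdot a s'$ reached in that iteration, where $s'$ is drawn according to $P(\last(p),a)$; so the sequence for $(p,a)$ is a probabilistic mixture (over the choice of $s'$) of the sequences for the children, shifted by a constant. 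I would first handle the base case honestly — a node $p$ reached outside the tree has its value drawn once from $f(p)$ and never updated, so $\overline X_{p,n}$ is, for $n\geq 1$, a fixed random variable with values in $I\subseteq[0,1]$; the means $\mu_{p,n}$ are constant (hence trivially convergent), and the tail inequalities hold with any $C_p$ large enough that $\Delta_{N_p}(\delta)\geq 1$ for the relevant range, using boundedness in $[0,1]$.

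Next I would carry out the inductive step. By Proposition~\ref{prop:ValueIte} the limiting mean $\mu_a$ for the pair $(p,a)$ should be $R(\last(p),a)+\sum_{s'}P(\last(p),a)(s')\Val_M^{H-|p|-1}(s')$, and convergence of $\mu_{a,n}$ to this value follows by combining the inductive convergence at each child with the UCB1 optimality guarantee recalled in Section~\ref{sec:UCB} (the bound on $|\mu_b - \E[\overline X_{b,n}]|$ by $O(\ln n/n)$ applied at the child node $b$), together with a bound on how often each child is selected as $t$ grows — this is exactly the lower bound on the number of plays of an action quoted at the end of Section~\ref{sec:UCB}. For the tail inequalities, the point is that the UCB1 analysis of \cite{DBLP:conf/ecml/KocsisS06} itself proves a Hoeffding-type tail bound on the \emph{global} average $\overline X_n$ at a node given that the per-child sequences satisfy the drift conditions; applying this at each child of $p$ and then averaging over the stochastic transition $P(\last(p),a)$ — which only mixes distributions already concentrated around their means — yields the required tail inequality for $(X_{a,t})_t$ with a suitable constant $C_p$ (uniform over the tree since $H$ is fixed and the state space finite).

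The main obstacle is the interaction between the stochastic transitions of the MDP and the adaptive, non-stationary selection made by UCT at the child nodes: the sequence of payoffs fed back to $(p,a)$ is not i.i.d., since which child $s'$ is visited is random but the position $t_{s'}$ within that child's own payoff sequence is itself determined by the history. I expect the careful step to be showing that the averaged-over-$s'$ sequence still satisfies a clean tail bound of the form $\Delta_n(\delta)=C_p\sqrt{n\ln(1/\delta)}$; this requires a union bound over the (finitely many) children and a conditioning argument that the drift tail inequality at a child, which is stated for its intrinsic index $t_{s'}$, transfers to the ambient index $t$ — precisely the kind of nested bookkeeping that \cite[Theorem 6]{DBLP:conf/ecml/KocsisS06} handles, with the only genuinely new ingredient being that the base of the induction is now a single draw from $f(p)$ rather than a recursive MCTS call, and a bounded-in-$[0,1]$ random variable trivially satisfies the drift conditions regardless of the shape of $f(p)$. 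Once Lemma~\ref{lm:MAB} is in place, Theorem~\ref{thm:MCTS} follows by the bottom-up application of the UCB1 results exactly as sketched in the paragraph preceding the lemma.
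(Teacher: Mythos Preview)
Your proposal is correct and follows essentially the same approach as the paper: bottom-up induction on $H-|p|$, with the base case handled by Chernoff--Hoeffding on the bounded simulation draw, and the inductive step combining the UCB1 bias bound (Theorem~\ref{thm:bias-bound}), the lower bound on per-action plays (Theorem~\ref{thm:lower-bound}), and the aggregate tail inequality (Theorem~\ref{thm:drift-conditions}) at the children, summed via a union bound over $s'\in\Supp(P(\last(p),a))$. The only minor imprecision is your base-case phrasing (``never updated'' and ``$C_p$ large enough that $\Delta_{N_p}(\delta)\geq 1$''): the paper instead applies Hoeffding directly to the single bounded draw $\reward_{\I_1(p\cdot as)}(p\cdot as)$ to get a genuine $\sqrt{\ln(1/\delta)}$ tail with $C_p=1/\sqrt{2}$, and separates this term from the per-action totals via $\total_n(p)=\reward_{\I_1(p)}(p)+\sum_a\total_n(p,a)$.
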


Although the long-term guarantees of Theorem~\ref{thm:MCTS} hold
for any simulation phase independently of the MDP,
in practice one would expect better results from a good simulation, that gives
a value close to the real value of the current node.
Domain-specific knowledge can be used to obtain such simulations,
and also to guide the selection phase based on heuristics.
Our goal will be to preserve the theoretical guarantees of MCTS in the process.

\section{Symbolic advice for MCTS}

In this section, we introduce a notion of advice meant to guide the construction
of the Monte Carlo search tree. We argue that a symbolic approach
is needed in order to handle large MDPs in practice.
Let a \emph{symbolic advice} $\A$ be a logical formula
over finite paths whose truth value can be tested with an operator $\models$.

\begin{example}\label{ex:pacman-simulation-advice}
  A number of standard notions can fit this framework.
  For example, reachability and safety properties, LTL formul\ae\ over finite traces or
  regular expressions could be used.
  We will use a safety property for {\sc{Pac-Man}} as a example (see Figure~\ref{fig:grids}),
  by assuming that the losing states of the MDP should be avoided.
  This advice is thus satisfied by every path such that Pac-Man
  does not make contact with a ghost.
\end{example}

We denote by $\FPaths_M^H(\A)$ the set of paths $p\in\FPaths_M^H$ such that $p\models \A$. For a path $p\in \FPaths_M^{\leq H}$, we denote by $\FPaths_M^H(p,\A)$ the set of paths $p'\in\FPaths_M^H(p)$
such that $p'\models \A$.\footnote{In particular, for all $s\in S$, $\FPaths_M^H(s,\A)$ refers to the paths of length $H$ that start from $s$ and that satisfy $\A$.}

A \emph{nondeterministic strategy} is a function $\sigma : \FPaths_M \to 2^A$
that maps a finite path $p$ to a subset of $A$.
For a strategy $\sigma'$ and a nondeterministic strategy $\sigma$,
$\sigma'\subseteq \sigma$ if for all $p$,
$\Supp(\sigma'(p))\subseteq \sigma(p)$.
Similarly, a nondeterministic strategy for the environment is
a function $\tau : \FPaths_M\times A \to 2^S$
that maps a finite path $p$ and an action $a$ to a subset
of $\Supp(P(\last(p),a))$.
We extend the notations used for probabilistic strategies to nondeterministic
strategies in a natural way, so that $\FPaths_M^H(\sigma)$
and $\FPaths_M^H(\tau)$ denote the paths of length $H$ compatible with
the strategy $\sigma$ or $\tau$, respectively.

For a symbolic advice $\A$ and a horizon $H$, we define
a nondeterministic strategy $\sigma_\A^H$
and a nondeterministic strategy $\tau_\A^H$ for the environment such that
for all paths $p$ with $|p|< H$,
$$\sigma_\A^H(p)=\{a\in A\mid
\exists s\in S, \exists p'\in\FPaths_M^{H-|p|-1}(s), p\cdot as \cdot p'\models \A\}\,,$$
$$\tau_\A^H(p,a)=\{s\in S\mid
\exists p'\in\FPaths_M^{H-|p|-1}(s), p\cdot as \cdot p'\models \A\}\,.$$
The strategies $\sigma_\A^H$ and $\tau_\A^H$ can be defined arbitrarily
on paths $p$ of length at least $H$, for example with $\sigma_\A^H(p)=A$ and $\tau_\A^H(p,a)=\Supp(P(\last(p),a))$
for all actions $a$.
Note that by definition, $\FPaths_M^H(s,\A)=\FPaths_M^H(s,\sigma_\A^H)
\cap\FPaths_M^H(s,\tau_\A^H)$ for all states $s$.

Let $\top$ (resp. $\bot$) denote the universal advice (resp. the empty advice)
satisfied by every finite path (resp. never satisfied), and let $\sigma_\top$
and $\tau_\top$ (resp. $\sigma_\bot$ and $\tau_\bot$) be the associated
nondeterministic strategies.
We define a class of advice that can be enforced against an adversarial environment
by following a nondeterministic strategy,
and that are minimal in the sense that paths that are not compatible with this
strategy are not allowed.

\begin{definition}[Strongly enforceable advice]\label{def:Advice}
A symbolic advice $\A$ is called a strongly enforceable advice from a state $s_0$ and for a horizon $H$ if
there exists a nondeterministic strategy $\sigma$ such that
$\FPaths_M^H(s_0,\sigma)=\FPaths_M^H(s_0,\A)$,
and such that $\sigma(p)\neq\emptyset$
for all paths $p\in\FPaths_M^{\leq H-1}(s_0,\sigma)$.
\end{definition}

Note that Definition~\ref{def:Advice} ensures that paths that follow $\sigma$
can always be extended into longer paths that follow $\sigma$.
This is a reasonable assumption to make for a nondeterministic strategy
meant to enforce a property.
In particular, $s_0$ is a path of length 0 in $\FPaths_M^{0}(s_0,\sigma)$,
so that $\sigma(s_0)\neq\emptyset$ and so that by induction $\FPaths_M^i(s_0,\sigma)\neq\emptyset$
for all $i\in[0,H]$.

\begin{lemma}\label{lm:enforceAdv}
  Let $\A$ be a strongly enforceable advice from $s_0$ with horizon $H$.
  It holds that $\FPaths_M^H(s_0,\sigma_\A^H)=\FPaths_M^H(s_0,\A)$.
  Moreover, for all paths $p\in\FPaths_M^{\leq H-1}(s_0)$ and all actions $a$,
  either $\tau_\A^H(p,a)=\tau_\top(p,a)$ or $\tau_\A^H(p,a)=\tau_\bot(p,a)$.
  Finally, for all paths $p$ in $\FPaths_M^{\leq H-1}(s_0,\sigma_\A^H)$,
  $\sigma_\A^H(p)\neq\emptyset$ and $a\in\sigma_\A^H(p)$ if and only if $\tau_\A^H(p,a)=\tau_\top(p,a)$.
\end{lemma}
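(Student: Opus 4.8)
The plan is to prove the three assertions of Lemma~\ref{lm:enforceAdv} in sequence, leaning on the defining property of strongly enforceable advice: there exists a nondeterministic strategy $\sigma$ with $\FPaths_M^H(s_0,\sigma)=\FPaths_M^H(s_0,\A)$ and $\sigma(p)\neq\emptyset$ on every $p\in\FPaths_M^{\leq H-1}(s_0,\sigma)$. The guiding intuition is that for such advice, whether a one-step extension $p\cdot as$ can be completed to a satisfying path of length $H$ does not depend on the choice of $s$ (the environment cannot sabotage the advice), which is exactly what makes $\tau_\A^H$ collapse to $\tau_\top$ or $\tau_\bot$ and makes $\sigma_\A^H$ coincide with $\sigma$.

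\textbf{Step 1 (first claim).} I would first show $\FPaths_M^H(s_0,\sigma_\A^H)=\FPaths_M^H(s_0,\A)$. The inclusion $\FPaths_M^H(s_0,\A)\subseteq\FPaths_M^H(s_0,\sigma_\A^H)$ is direct from the definition of $\sigma_\A^H$: if $p=s_0a_0s_1\dots s_H\models\A$, then for each prefix $p_{|t}$, the action $a_t$ witnesses membership in $\sigma_\A^H(p_{|t})$ via the suffix $s_{t+1}a_{t+1}\dots s_H$, so $p$ is compatible with $\sigma_\A^H$. For the reverse inclusion I would use the given strategy $\sigma$: take $p=s_0a_0\dots s_H\in\FPaths_M^H(s_0,\sigma_\A^H)$ and argue by induction on $t$ that $p_{|t}\in\FPaths_M^t(s_0,\sigma)$. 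The inductive step is the crux — from $a_t\in\sigma_\A^H(p_{|t})$ we get some completion $p_{|t}\cdot a_t s\cdot p'\models\A$, hence $p_{|t}\cdot a_t s\cdot p'\in\FPaths_M^H(s_0,\sigma)$ by the hypothesis $\FPaths_M^H(s_0,\sigma)=\FPaths_M^H(s_0,\A)$, and the prefix property of $\sigma$-compatible paths gives $a_t\in\sigma(p_{|t})$; combined with $s_{t+1}\in\Supp(P(\last(p_{|t}),a_t))$ (automatic, since $\sigma$ places no constraint on the environment) this yields $p_{|t+1}\in\FPaths_M^{t+1}(s_0,\sigma)$. Once $p\in\FPaths_M^H(s_0,\sigma)$, applying the equality again gives $p\models\A$.

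\textbf{Step 2 (second claim).} For the dichotomy on $\tau_\A^H$, fix $p\in\FPaths_M^{\leq H-1}(s_0)$ and an action $a$. I would show that if $\tau_\A^H(p,a)$ is nonempty then it equals $\Supp(P(\last(p),a))=\tau_\top(p,a)$. Suppose $s\in\tau_\A^H(p,a)$, so there is $p'$ with $p\cdot as\cdot p'\models\A$. The key point: this witnessing path, once restricted to a $\sigma_\A^H$-compatible / $\sigma$-compatible context, forces $a\in\sigma(p)$ (when $p$ itself follows $\sigma$), and then by the "never-empty, always-extendable" property of $\sigma$, every $s''\in\Supp(P(\last(p),a))$ admits a continuation $p\cdot as''\cdot p''$ that follows $\sigma$, hence satisfies $\A$, hence $s''\in\tau_\A^H(p,a)$. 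The subtlety I need to handle carefully is the case where $p$ does not follow $\sigma$: then I should argue $\tau_\A^H(p,a)=\emptyset$ because no satisfying path of length $H$ passes through $p$ at all (again by the equality $\FPaths_M^H(s_0,\sigma)=\FPaths_M^H(s_0,\A)$ and the prefix-closure of $\sigma$-compatibility). So the real work is a clean case split on whether $p\in\FPaths_M^{\leq H-1}(s_0,\sigma)$.

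\textbf{Step 3 (third claim), and the main obstacle.} For paths $p\in\FPaths_M^{\leq H-1}(s_0,\sigma_\A^H)$: nonemptiness of $\sigma_\A^H(p)$ follows from Step~1 ($\FPaths_M^H(s_0,\sigma_\A^H)=\FPaths_M^H(s_0,\A)\neq\emptyset$ and $p$ is a prefix of some such path, so some action extends it) together with the extendability guarantee for $\sigma$. The equivalence $a\in\sigma_\A^H(p)\iff\tau_\A^H(p,a)=\tau_\top(p,a)$ then combines Step~2 with the observation that $a\in\sigma_\A^H(p)$ is by definition equivalent to $\tau_\A^H(p,a)\neq\emptyset$. \textbf{I expect the main obstacle} to be the bookkeeping that links the "existential" definitions of $\sigma_\A^H$ and $\tau_\A^H$ (which only ask for \emph{some} completing suffix) to the "universal" behaviour of the enforcing strategy $\sigma$ (whose compatible paths are closed under any environment choice). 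Concretely: extracting, from a single witness $p\cdot as\cdot p'\models\A$, the conclusion $a\in\sigma(p)$ requires arguing that $p$ itself lies in $\FPaths_M^{\leq H-1}(s_0,\sigma)$ — i.e.\ that the witness path is $\sigma$-compatible all the way down — which is where the equality $\FPaths_M^H(s_0,\sigma)=\FPaths_M^H(s_0,\A)$ plus prefix-closure of $\sigma$-compatible paths does the heavy lifting. Everything else is a routine induction on path length.
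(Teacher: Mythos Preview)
Your proposal is correct and follows essentially the same route as the paper's proof: both use the defining strategy $\sigma$ to pass from a single witnessing path $p\cdot as\cdot p'\models\A$ to $\sigma$-compatibility of the prefix, then exploit that $\sigma$-compatibility is insensitive to the stochastic choice $s$ together with the non-emptiness/extendability of $\sigma$ to get the dichotomy on $\tau_\A^H$. One small simplification: your case split in Step~2 on whether $p$ follows $\sigma$ is not really needed, since any witness $p\cdot as\cdot p'\models\A$ already lies in $\FPaths_M^H(s_0,\sigma)$ and hence forces $p$ to follow $\sigma$ by prefix-closure---the paper exploits this directly via a contradiction argument rather than a case analysis.
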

\begin{proof}%
  We have $\FPaths_M^H(s_0,\A)=\FPaths_M^H(s_0,\sigma_\A^H) \cap\FPaths_M^H(s_0,\tau_\A^H)$
  for any advice $\A$.
  Let us prove that $\FPaths_M^H(s_0,\sigma_\A^H)\subseteq\FPaths_M^H(s_0,\A)$ for
  a strongly enforceable advice $\A$ of associated strategy $\sigma$.
  Let $p=p'\cdot as$ be a path in $\FPaths_M^H(s_0,\sigma_\A^H)$.
  By definition of $\sigma_\A^H$,
  there exists $s'\in S$ such that $p'\cdot a s'\models\A$, so that
  $p'\cdot a s'\in \FPaths_M^H(s_0,\A)=\FPaths_M^H(s_0,\sigma)$. Since $s\in\Supp(P(\last(p'),a))$,
  $p=p'\cdot as$ must also belong to
  $\FPaths_M^H(s_0,\sigma)=\FPaths_M^H(s_0,\A)$.

  Consider a path $p$ and an action $a$ such that $|p|< H$.
  We want to prove that either all stochastic transitions starting from $(p,a)$
  are allowed by $\A$, or none of them are.
  By contradiction, let us assume that there exists $s_1$ and $s_2$ in $\Supp(P(\last(p),a))$ such that
  for all $p'_1\in\FPaths_M^{H-|p|-1}(s_1)$, $p\cdot as_1 \cdot p'_1\not\models \A$,
  and such that there exists $p'_2\in\FPaths_M^{H-|p|-1}(s_2)$
  with $p\cdot as_2 \cdot p'_2\models \A$.
  From $p\cdot as_2 \cdot p'_2\models \A$,
  we obtain $p\cdot as_2 \cdot p'_2\in \FPaths_M^H(\sigma)$, so that
  $p\cdot as_2$ is a path that follows $\sigma$.
  Then, $p\cdot as_1$ is a path that follows $\sigma$ as well.
  It follows that $\sigma(p\cdot as_1)\neq\emptyset$, and $p\cdot as_1$
  can be extended in to a path $p\cdot as_1 p'_3\in\FPaths_M^H(\sigma)$.
  This implies the contradiction $p\cdot as_1 p'_3\models\A$.

  Finally, consider a path $p$ in $\FPaths_M^{\leq H-1}(s_0,\sigma_\A^H)$.
  By the definitions of $\sigma_\A^H$ and $\tau_\A^H$, $a\in\sigma_\A^H(p)$ if and only if $\tau_\A^H(p,a)\neq\emptyset$, so that $\tau_\A^H(p,a)=\tau_\top(p,a)$.
  Then, let us write $p=p'\cdot a s$. From $p\in\FPaths_M^{\leq H-1}(s_0,\sigma_\A^H)$ we get
  $a\in\sigma_\A^H(p')$, so that $s\in\tau_\A^H(p',a)$, and therefore $\sigma_\A^H(p)\neq\emptyset$.
\end{proof}

A strongly enforceable advice is encoding a notion of guarantee, as
$\sigma_\A^H$ is a winning strategy for the
reachability objective on $T(M,s_0,H)$ defined by the set $\FPaths^H_M(\A)$.

We say that the strongly enforceable advice $\A'$ is \emph{extracted} from a symbolic advice $\A$
for a horizon $H$ and a state $s_0$
if $\A'$ is the greatest part of $\A$ that can be guaranteed for the horizon $H$ starting from $s_0$,
\textit{i.e.}~if $\FPaths^H_M(s_0,\A')$ is the greatest subset of $\FPaths^H_M(s_0,\A)$
such that $\sigma_{\A'}^H$ is a winning strategy for the reachability
objective $\FPaths^H_M(s_0,\A)$ on $T(M,s_0,H)$.
This greatest subset always exists
because if $\A'_1$ and
$\A'_2$ are strongly enforceable advice in $\A$, then $\A'_1\cup\A'_2$
is strongly enforceable by union of the nondeterministic strategies associated
with $\A'_1$ and $\A'_2$.
However, this greatest subset may be empty, and as $\bot$ is not a strongly enforceable advice we say that
in this case $\A$ cannot be enforced from $s_0$ with horizon $H$.

\begin{example}\label{ex:pacman-selection-advice}
  Consider a symbolic advice $\A$ described by the safety property for {\sc{Pac-Man}} of Example~\ref{ex:pacman-simulation-advice}.
  For a fixed horizon $H$, the associated nondeterministic strategies
  $\sigma_\A^H$ and $\tau_\A^H$ describe action choices and stochastic transitions
  compatible with this property.
  Notably, $\A$ may not be a strongly enforceable advice,
  as there may be situations $(p,a)$ where some
  stochastic transitions lead to bad states and some do not.
  In the small grid of Figure~\ref{fig:grids}, the path of length 1 that corresponds to Pac-Man
  going left and the red ghost going up is allowed by the advice $\A$, but not by any
  safe strategy for Pac-Man as there is a possibility of losing by playing left.
  If a strongly enforceable advice $\A'$ can be extracted from $\A$,
  it is a more restrictive safety property,
  where the set of bad states is obtained as the attractor~\cite[Section~2.3]{DBLP:books/cu/11/Loding11}
  for the environment towards the bad states defined in $\A$.
  In this setting, $\A'$ corresponds to playing according to a strategy for Pac-Man
  that ensures not being eaten by adversarial ghosts for the next $H$ steps.
\end{example}

\begin{definition}[Pruned MDP]
  For an MDP $M=(S,A,P,R,R_T)$ a horizon $H\in\N$, a state $s_0$
  and an advice $\A$, let the pruned unfolding $T(M,s_0,H,\A)$
  be defined as a sub-MDP of $T(M,s_0,H)$ that contains exactly all paths
  in $\FPaths_M^H(s_0)$ satisfying $\A$.
  It can be obtained by removing all action transitions
  that are not compatible with $\sigma_\A^H$, and all stochastic transitions
  that are not compatible with $\tau_\A^H$.
  The distributions $P(p,a)$ are then normalised over the
  stochastic transitions that are left.
\end{definition}

Note that by Lemma~\ref{lm:enforceAdv}, if $\A$ is a strongly enforceable advice then
$\tau_\A^H(p,a)=\tau_\top(p,a)$ for all paths $p$ in $\FPaths_M^{\leq H-1}(s_0,\sigma_\A^H)$,
so that
the normalisation step for the distributions $P(p,a)$ is not needed.
It follows that for all nodes $p$ in $T(M,s_0,H,\A)$ and all actions $a$,
the distributions $P(p,a)$ in $T(M,s_0,H,\A)$ are the same as in $T(M,s_0,H)$.
Thus, for all strategies $\sigma$ in $T(M,s_0,H,\A)$, $\Val^{H}_{T(M,s_0,H,\A)}(s_0,\sigma)=\Val^{H}_{T(M,s_0,H)}(s_0,\sigma)$,
so that $\Val^{H}_{T(M,s_0,H,\A)}(s_0)\leq \Val^{H}_{T(M,s_0,H)}(s_0)=\Val^H_{M}(s_0)$ by Lemma~\ref{lm:unfolding}.

\begin{definition}[Optimality assumption]\label{lm:opt-ass}
  An advice $\A$ satisfies the optimality assumption for horizon $H$ if
  $\sigma^{H,*}_{M,s}\subseteq\sigma_\A^H$ for all $s\in S$,
  where $\sigma^{H,*}_{M,s}$ is an
  optimal strategy for the expected total reward of horizon $H$
  at state $s$.
\end{definition}

\begin{lemma}\label{lm:opt-adv}
  Let $\A$ be a strongly enforceable advice that satisfies the optimality assumption.
  Then, $\Val^{H}_M(s_0)$ equals $\Val^{H}_{T(M,s_0,H,\A)}(s_0)$.
  Moreover, $\opt^{H}_{T(M,s_0,H,\A)}(s_0)\subseteq\opt^{H}_M(s_0)$.
\end{lemma}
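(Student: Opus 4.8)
The plan is to prove the two claims separately, using the structural facts about strongly enforceable advice already established. Recall that we observed just before Definition~\ref{lm:opt-ass} that for a strongly enforceable advice $\A$, the pruned unfolding $T(M,s_0,H,\A)$ shares exactly the same stochastic transitions and probability distributions as $T(M,s_0,H)$, differing only in that certain action transitions are removed; consequently $\Val^H_{T(M,s_0,H,\A)}(s_0)\leq\Val^H_{T(M,s_0,H)}(s_0)=\Val^H_M(s_0)$. So the only thing left for the first claim is the reverse inequality $\Val^H_{T(M,s_0,H,\A)}(s_0)\geq\Val^H_M(s_0)$.

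For the reverse inequality, I would exhibit a strategy in the pruned unfolding that attains $\Val^H_M(s_0)$. Take an optimal deterministic strategy $\sigma^{H,*}_{M,s_0}$ for the expected total reward of horizon $H$ at $s_0$ in $M$. By the optimality assumption, $\sigma^{H,*}_{M,s_0}\subseteq\sigma_\A^H$, meaning every action it selects along a path is allowed by $\sigma_\A^H$. By Lemma~\ref{lm:enforceAdv}, for a strongly enforceable advice every action in $\sigma_\A^H(p)$ has $\tau_\A^H(p,a)=\tau_\top(p,a)$, so all stochastic successors are retained; hence every play consistent with $\sigma^{H,*}_{M,s_0}$ is a path of $T(M,s_0,H,\A)$, i.e.\ $\sigma^{H,*}_{M,s_0}$ (lifted to the unfolding, where it becomes memoryless over path-states) is a valid strategy there. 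Since the transition probabilities and the rewards $R, R_T$ are unchanged in the pruned MDP, this strategy achieves the same expected total reward $\Val^H_M(s_0)$ in $T(M,s_0,H,\A)$ as it does in $M$ (using Lemma~\ref{lm:unfolding} to pass between $M$ and its unfolding). Therefore $\Val^H_{T(M,s_0,H,\A)}(s_0)\geq\Val^H_M(s_0)$, and combined with the earlier inequality we get equality.

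For the second claim, let $a\in\opt^H_{T(M,s_0,H,\A)}(s_0)$, so there is an optimal strategy $\sigma'$ in $T(M,s_0,H,\A)$ with $\sigma'(s_0)=a$ and $\Val^H_{T(M,s_0,H,\A)}(s_0,\sigma')=\Val^H_{T(M,s_0,H,\A)}(s_0)$. Because every path allowed in $T(M,s_0,H,\A)$ is also a path of $T(M,s_0,H)$ (pruning only removes transitions) and the probabilities and rewards coincide, $\sigma'$ is also a valid strategy in $T(M,s_0,H)$ with the same value. By the first claim this value equals $\Val^H_M(s_0)=\Val^H_{T(M,s_0,H)}(s_0)$, so $\sigma'$ is optimal in $T(M,s_0,H)$ and witnesses $a\in\opt^H_{T(M,s_0,H)}(s_0)$, which equals $\opt^H_M(s_0)$ by Lemma~\ref{lm:unfolding}.

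The main obstacle is bookkeeping rather than a deep argument: one must be careful that ``$\sigma^{H,*}_{M,s}\subseteq\sigma_\A^H$'' in the optimality assumption really does guarantee that \emph{all} finite prefixes generated by the optimal strategy stay inside $\sigma_\A^H$ (it does, by a straightforward induction on path length, since the containment holds at every path), and that restricting to deterministic strategies is harmless (justified by the cited result that the optimal expected total reward is attained by a deterministic strategy). The other subtlety is matching the notion of ``strategy in the unfolding'' with ``strategy in $M$'': because states of $T(M,s_0,H)$ are themselves paths of $M$, a history-dependent strategy of $M$ becomes a memoryless strategy of the unfolding, and Lemma~\ref{lm:unfolding} already packages exactly the correspondence of values and optimal action sets that we need, so no new work is required there.
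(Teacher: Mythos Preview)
Your proposal is correct and follows essentially the same approach as the paper's own proof: both use the optimality assumption to show that $\sigma^{H,*}_{M,s_0}$ is playable in $T(M,s_0,H,\A)$ and attains $\Val^H_M(s_0)$ there (yielding the missing inequality), and then for the inclusion of optimal actions lift an optimal strategy of the pruned unfolding back to $T(M,s_0,H)$ and invoke Lemma~\ref{lm:unfolding}. Your version is somewhat more explicit about the bookkeeping (invoking Lemma~\ref{lm:enforceAdv} for the stochastic successors and spelling out the strategy correspondence between $M$ and its unfolding), but there is no substantive difference in the argument.
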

\begin{proof}
By the optimality assumption $\sigma^{H,*}_{M,s}$ is a strategy that can be followed in $T(M,s_0,H,\A)$.
Indeed, from a path $p$ in $\FPaths_M^{\leq H-1}(s_0,\sigma_\A^H)$ any action $a$ in the support of
$\sigma^{H,*}_{M,s}(\last(p))$ satisfies $a\in\sigma_\A^H(p)$.
Thus, by Lemma~\ref{lm:unfolding} $\Val^H_{M}(s_0)=\Val^{H}_{T(M,s_0,H)}(s_0,\sigma^{H,*}_{M,s})=\Val^{H}_{T(M,s_0,H,\A)}(s_0,\sigma^{H,*}_{M,s})$.
By definition of the optimal expected total reward, $\Val^{H}_{T(M,s_0,H,\A)}(s_0,\sigma^{H,*}_{M,s})\leq \Val^{H}_{T(M,s_0,H,\A)}(s_0)$, so that $\Val^{H}_M(s_0)=\Val^{H}_{T(M,s_0,H,\A)}(s_0)$.
Let $a$ be an action in $\opt^{H}_{T(M,s_0,H,\A)}(s_0)$. There exists an optimal strategy
$\sigma$ that maximises $\Val^{H}_{T(M,s_0,H,\A)}(s_0,\sigma)$ so that
$\sigma(s_0)=a$.
It follows from $\Val^{H}_{T(M,s_0,H)}(s_0)=\Val^{H}_{T(M,s_0,H,\A)}(s_0)$ that
$\sigma$ is also an optimal strategy in $T(M,s_0,H)$, so that $a\in\opt^{H}_{T(M,s_0,H)}(s_0)=\opt^{H}_M(s_0)$
by Lemma~\ref{lm:unfolding}.
\end{proof}

\begin{example}
  Let $\A'$ be a strongly enforceable safety advice for {\sc{Pac-Man}} as described in Example~\ref{ex:pacman-selection-advice}.
  Assume that visiting a bad state leads to an irrecoverably bad reward,
  so that taking an unsafe action (\textit{i.e.}~an action such that there is a non-zero probability of losing
  associated with all Pac-Man strategies) is always worse (on expectation) than taking a safe action.
  Then, the optimality assumption holds for the advice $\A'$.
  This can be achieved by giving a penalty score for losing that is low enough.
\end{example}

\subsection{MCTS under symbolic advice}\label{sec:MCTSAdvice}

We will augment the MCTS algorithm using two advice: a selection advice $\varphi$
to guide the MCTS tree construction, and a simulation advice $\psi$
to prune the sampling domain.
We assume that the selection advice is a strongly enforceable advice that satisfies
the optimality assumption. %
Notably, we make no such assumption for the simulation advice,
so that any symbolic advice can be used.

\paragraph{Selection phase under advice} We use the advice $\varphi$
  to prune the tree according to $\sigma_\varphi^H$.
  Therefore, from any node $p$ our version of UCT selects an action~$a^*$ in the set
    $$\argmax_{a\in \sigma_\varphi^H(p)}
    \left[
    \mctsvalue(p,a)
    +C
    \sqrt{\frac{\ln\left(\numsamples(p)\right)}
    {\numsamples(p, a)}}
    \right]\,.$$
\paragraph{Simulation phase under advice} For the simulation phase,
  we use a sampling-based
  approach biased by the simulation advice: paths are sampled by
  picking actions uniformly at random in the
  pruned MDP $T(M,s_0,H,\psi)$, with a fixed prefix $p$ defined
  by the current node in the search tree.
  This can be interpreted as a probability distribution
  over $\FPaths^H_M(p,\psi)$.
  If $p\notin T(M,s_0,H,\psi)$, the simulation phase outputs a value of $0$
  as it is not possible to satisfy $\psi$ from $p$.
  Another approach that does not require computing
  the pruned MDP repeats the following steps for a bounded number of time
  before returning $0$ if no valid sample is found:
  \begin{enumerate}
    \item Pick a path $p\cdot p'\in \FPaths^H_M(p)$ using a uniform sampling method;
    \item If $p\cdot p'\not\models\psi$, reject and try again, otherwise output $p'$ as a sample.
  \end{enumerate}
  We compute $\mctsvalue_i(p)$ by averaging the rewards of these samples.

\paragraph{Theoretical analysis} We show that the theoretical guarantees of the MCTS algorithm
developed in Section~\ref{sec:MCTSSimulation} are maintained by the MCTS algorithm
under symbolic advice.

\begin{theorem}\label{thm:advMCTS}
  Consider an MDP $M$, a horizon $H$ and a state $s_0$.
  Let $V_n(s_0)$ be a random variable that represents the value $\mctsvalue_n(s_0)$
  at the root
  of the search tree after $n$ iterations of the MCTS algorithm under
  a strongly enforceable advice $\varphi$ satisfying
  the optimality assumption and a simulation advice $\psi$.
  Then, $|\E[V_n(s_0)]-\Val^H_{M}(s_0)|=\mathcal O((\ln n)/n)$.
  Moreover, the failure probability
  $\Pr[\argmax_a\mctsvalue_n(s_0,a)\not\subseteq \opt^H_{M}(s_0)]$ converges to zero.
\end{theorem}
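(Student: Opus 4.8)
The plan is to reduce Theorem~\ref{thm:advMCTS} to Theorem~\ref{thm:MCTS} by reinterpreting the MCTS-under-advice algorithm as an instance of the plain MCTS algorithm (with a general simulation phase) running on the pruned MDP $T(M,s_0,H,\varphi)$, and then transporting the conclusions back to $M$ using Lemma~\ref{lm:opt-adv}. First I would observe that, since $\varphi$ is a strongly enforceable advice, Lemma~\ref{lm:enforceAdv} guarantees that the stochastic transitions of $T(M,s_0,H,\varphi)$ coincide with those of $T(M,s_0,H)$ (no renormalisation), so $T(M,s_0,H,\varphi)$ is genuinely an MDP in the sense of Definition~\ref{def:mdp} whose only difference from $T(M,s_0,H)$ is that some actions have been removed at some nodes. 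The selection phase under advice, which restricts the UCT $\argmax$ to $\sigma_\varphi^H(p)$, is then literally the UCT selection phase of plain MCTS on $T(M,s_0,H,\varphi)$: at each node only the surviving actions constitute the bandit arms. So running MCTS under advice $\varphi$ on $M$ produces, iteration by iteration, the same values and counters as plain MCTS on the MDP $M' := T(M,s_0,H,\varphi)$.

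Next I would handle the simulation advice $\psi$. The point is that the simulation phase under advice is exactly an instance of the "more general approach" to the simulation phase allowed in Section~\ref{sec:MCTSSimulation}: for each node $p$ of $M'$, sampling paths uniformly in $T(M,s_0,H,\psi)$ with prefix $p$ (outputting $0$ when $p\notin T(M,s_0,H,\psi)$), and averaging their rewards over the $c$ samples, defines a fixed probability distribution $f(p)$ over a finite set $I\subseteq[0,1]$ of average rewards. This is precisely the data $(I,f)$ required by Theorem~\ref{thm:MCTS}, and Lemma~\ref{lm:MAB} tells us the induced payoff sequences satisfy the drift conditions \emph{for any} choice of $I$ and $f$ — in particular for this one. (One subtlety to note explicitly: the alternative rejection-sampling implementation may return $0$ when no valid sample is found within the bounded number of tries; this still yields a well-defined distribution on $I\cup\{0\}$, so it is still covered.) Hence Theorem~\ref{thm:MCTS}, applied to the MDP $M'$ with this simulation phase, yields that $|\E[V_n(s_0)]-\Val^H_{M'}(s_0)| = \mathcal{O}((\ln n)/n)$ and that $\Pr[\argmax_a \mctsvalue_n(s_0,a)\not\subseteq \opt^H_{M'}(s_0)]\to 0$, where $\opt^H_{M'}(s_0)$ is computed over the actions surviving at $s_0$ in $M'$.

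Finally I would translate these statements from $M'$ back to $M$. Since $\varphi$ is strongly enforceable and satisfies the optimality assumption, Lemma~\ref{lm:opt-adv} gives $\Val^H_{M'}(s_0) = \Val^H_M(s_0)$, which immediately converts the value-convergence bound. For the failure probability, the same lemma gives $\opt^H_{M'}(s_0)\subseteq\opt^H_M(s_0)$, so the event $\{\argmax_a\mctsvalue_n(s_0,a)\not\subseteq\opt^H_M(s_0)\}$ is contained in $\{\argmax_a\mctsvalue_n(s_0,a)\not\subseteq\opt^H_{M'}(s_0)\}$, whose probability vanishes; hence the former does too. I expect the main obstacle to be not any single deep step but the careful bookkeeping that the reduction is \emph{exact}: one must check that the argmax in the selection phase ranges over exactly the arms present in $M'$ at that node (so the bandit-per-node correspondence is faithful), that removing actions never strands a node with no successor — guaranteed by $\sigma_\varphi^H(p)\neq\emptyset$ on reachable prefixes via Lemma~\ref{lm:enforceAdv} and Definition~\ref{def:Advice} — and that the value $0$ returned when $p$ is incompatible with $\psi$ is consistent with $f(p)$ being a distribution on $[0,1]$. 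Assembling these observations, the theorem follows.
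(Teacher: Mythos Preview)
Your proposal is correct and follows essentially the same approach as the paper: reduce MCTS under advice to plain MCTS on the pruned MDP $T(M,s_0,H,\varphi)$, cast the $\psi$-biased simulation as an instance of the general $(I,f)$ simulation phase, apply Theorem~\ref{thm:MCTS}, and then use Lemma~\ref{lm:opt-adv} to carry both the value equality and the inclusion $\opt^H_{T(M,s_0,H,\varphi)}(s_0)\subseteq\opt^H_M(s_0)$ back to $M$. Your write-up is in fact slightly more detailed than the paper's on the bookkeeping points (non-emptiness of $\sigma_\varphi^H(p)$, the rejection-sampling variant), but the structure is identical.
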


In order to prove Theorem~\ref{thm:advMCTS},
we argue that running MCTS under a selection advice $\varphi$ and
a simulation advice $\psi$ is equivalent to running the MCTS algorithm
of Section~\ref{sec:MCTSSimulation} on the pruned MDP $T(M,s_0,H,\varphi)$, with
a simulation phase defined using the advice $\psi$.

The simulation phase biased by $\psi$ can be described
in the formalism of Section~\ref{sec:MCTSSimulation},
with a domain $I=\{\frac{1}{c}\sum_{i=1}^c\Reward_M(p_i)\mid p_1,\dots,p_c\in \FPaths_{T(M,s_0,H,\varphi)}^{\leq H}\}$,
and a mapping $f_{\psi}$
from paths $p$ in $\FPaths_{T(M,s_0,H,\varphi)}^{\leq H}$ to
a probability distribution on $I$ describing the outcome
of a sampling phase launched
from the node $p$.
Formally, the weight of $\frac{1}{c}\sum_{i=1}^c\Reward_M(p_i)\in I$ in $f(p)$ is
the probability of sampling the sequence of paths $p_1,\dots,p_c$ in
the simulation phase under advice launched from $p$.
Then, from Theorem~\ref{thm:MCTS} we obtain convergence properties
of MCTS under symbolic advice
towards the value and optimal strategy in the pruned MDP,
and Lemma~\ref{lm:opt-adv} lets us conclude the proof of Theorem~\ref{thm:advMCTS}
as those values and strategies are maintained in $M$ by the optimality assumption.
In particular, the failure probability $\Pr[\argmax_a\mctsvalue_n(s_0,a)\not\subseteq \opt^H_{M}(s_0)]$
is upper bounded by $\Pr[\argmax_a\mctsvalue_n(s_0,a)\not\subseteq \opt^H_{T(M,s_0,H,\varphi)}(s_0)]$
since $\opt^H_{T(M,s_0,H,\varphi)}\subseteq\opt^H_{M}(s_0)$.

\subsection{Using satisfiability solvers}

We will now discuss the use of general-purpose solvers to implement
symbolic advice according to the needs of MCTS.

A symbolic advice $\A$ describes a finite set of paths
in $\FPaths^H_M$, and as such can be encoded as
a Boolean formula over a set of variables $V$,
such that satisfying assignments
$v:V\to\{\mathsf{true},\mathsf{false}\}$ are in bijection with
paths in $\FPaths^H_M(\A)$.

If a symbolic advice is described in Linear Temporal Logic,
and a symbolic model of the MDP $M$ is available,
one can encode $\A$ as a Boolean formula of size linear in the size of the LTL formula and $H$~\cite{lmcs:2236}.

\begin{example}
  In practice, one can use Boolean variables to encode the positions of Pac-Man and ghosts in the next $H$ steps
  of the game, then construct a CNF formula with clauses that encode the game rules and clauses that enforce the advice.
  The former clauses are implications such as ``if Pac-Man is in position $(x,y)$ and plays the action $a$, then it must be in position $(x',y')$ at the next game step'', while the latter clauses state that the position of Pac-Man should never be equal to the position of one of the Ghosts.
\end{example}

\paragraph{On-the-fly computation of a strongly enforceable advice}
A direct encoding of a strongly enforceable advice may prove
impractically large.
We argue for an on-the-fly computation of $\sigma_\A^H$ instead,
in the particular case where the strongly enforceable advice
is extracted from a symbolic advice $\A$
with respect to the initial state $s_0$ and with horizon $H$.
\begin{lemma}\label{lm:qbf-advice}
  Let $\A'$ be a strongly enforceable advice extracted from $\A$ for horizon $H$. Consider a node $p$
  at depth $i$ in $T(M,s_0,H,\A')$, for all $a_0\in A$,
  $a_0\in\sigma_{\A'}^H(p)$ if and only if
  $$\forall s_1 \exists a_1 \forall s_2 \dots \forall s_{H-i+1},
  \ p\cdot a_0s_1a_1s_2\dots s_{H-i+1} \models \A\,,$$
  where actions are quantified over $A$ and every $s_k$ is
  quantified over $\Supp(P(s_{k-1},a_{k-1}))$.
\end{lemma}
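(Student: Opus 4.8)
The plan is to unfold the recursive definition of $\sigma_{\A'}^H$ as the set of actions surviving the attractor computation, and to recognize that the alternating-quantifier formula in the statement is exactly the defining fixpoint condition for membership in that set, read level by level down the tree. Concretely, I would first recall from the discussion preceding Lemma~\ref{lm:enforceAdv} and from the definition of extraction that $\A'$ is the strongly enforceable advice whose associated nondeterministic strategy $\sigma_{\A'}^H$ is the \emph{greatest} winning strategy for the reachability objective $\FPaths_M^H(s_0,\A)$ on $T(M,s_0,H)$; equivalently, $\FPaths_M^H(s_0,\sigma_{\A'}^H)$ is obtained by iteratively removing, from $T(M,s_0,H)$, every action transition from a node $p$ at depth $i$ such that some stochastic successor $p\cdot a s_1$ has no surviving extension satisfying $\A$. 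This is a standard attractor/backward-induction computation on the finite tree $T(M,s_0,H)$, which I would make precise by induction on $H-i$, the distance from depth $i$ to the leaves.

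Next I would set up the induction. For the base case, a node $p$ at depth $H$ (so $i=H$): the formula degenerates to $p\models\A$ with no quantifiers, and indeed at depth $H$ there is no action to choose, $\sigma_{\A'}^H(p)$ being defined arbitrarily; the interesting base case is really depth $H-1$, where $a_0\in\sigma_{\A'}^H(p)$ iff for all $s_1\in\Supp(P(\last(p),a_0))$ the completed path $p\cdot a_0 s_1\models\A$ — which is exactly $\forall s_1,\ p\cdot a_0 s_1\models\A$. For the inductive step at depth $i<H-1$: by definition of the attractor, $a_0\in\sigma_{\A'}^H(p)$ iff every stochastic successor $p\cdot a_0 s_1$ lies in the surviving tree, i.e.\ for all $s_1\in\Supp(P(\last(p),a_0))$ there exists $a_1\in\sigma_{\A'}^H(p\cdot a_0 s_1)$; then I apply the induction hypothesis to the node $p\cdot a_0 s_1$ at depth $i+1$ to replace ``$\exists a_1\in\sigma_{\A'}^H(p\cdot a_0 s_1)$'' by the $(H-i-1)$-alternation formula $\exists a_1\forall s_2\dots\forall s_{H-i+1},\ p\cdot a_0 s_1 a_1 s_2\dots\models\A$. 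Prepending $\forall s_1$ yields precisely the claimed formula for $p$ at depth $i$. I would also invoke Lemma~\ref{lm:enforceAdv} to justify that the quantification over stochastic successors can be taken over the full $\Supp(P(s_{k-1},a_{k-1}))$ (no normalization/pruning of stochastic transitions inside $T(M,s_0,H,\A')$), which is what makes the $\forall s_k$ ranging over $\Supp(P(s_{k-1},a_{k-1}))$ the correct condition.

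The main obstacle I anticipate is not the induction itself but pinning down the correct characterization of $\sigma_{\A'}^H$ on nodes of $T(M,s_0,H,\A')$ and confirming that it coincides with $\sigma_\A^H$ restricted to such nodes in the ``existential'' direction: a priori $\sigma_\A^H(p)$ only asks for \emph{some} completion of $p\cdot a_0 s_1$ satisfying $\A$ along \emph{some} stochastic choices, whereas membership in the extracted strategy demands a completion that is robust against all adversarial stochastic choices — this is exactly the gap between $\A$ and $\A'$ highlighted in Example~\ref{ex:pacman-selection-advice}, and it is the reason the formula has alternating $\forall s_k\exists a_k$ rather than a single block of existentials. I would handle this by being careful that the induction hypothesis is stated for nodes \emph{inside} $T(M,s_0,H,\A')$ (so that the ``$\exists a_1\in\sigma_{\A'}^H(\cdot)$'' step is legitimate) and by separately checking that if the right-hand-side formula holds at $p$ then $p$ together with the witnessing action $a_0$ indeed stays inside the pruned tree, closing the equivalence in both directions.
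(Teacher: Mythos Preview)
Your proposal is correct and follows essentially the same route as the paper: a reverse induction on the depth $i$ (equivalently on $H-i$), with the one-step case as base and the inductive step unfolding $a_0\in\sigma_{\A'}^H(p)$ as ``for all $s_1$ there exists $a_1\in\sigma_{\A'}^H(p\cdot a_0 s_1)$'' before applying the hypothesis; the converse direction in both your plan and the paper's proof appeals to the maximality of $\A'$ among strongly enforceable sub-advice of $\A$. Your explicit invocation of Lemma~\ref{lm:enforceAdv} to justify that the $\forall s_k$ ranges over the full support is a clean touch that the paper leaves implicit.
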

\begin{proof}%
  The proof is a reverse induction on the depth $i$ of $p$.
  For the initialisation step with $i=H$, let us prove that
  $\forall s_1,\ p\cdot a_0s_1\models\A$ if and only if $a_0\in\sigma_{\A'}^H(p)$.
  On the one hand, if $\A$ is guaranteed by playing $a_0$ from $p$, then $a_0$ must be
  allowed by the greatest strongly enforceable subset of $\A$.
  On the other hand, $a_0\in\sigma_{\A'}^H(p)$ implies $\forall s_1,\ p\cdot a_0s_1\models\A'$ as $\A'$ is strongly enforceable, and finally $\A'\Rightarrow\A$.
  We now assume the property holds for $1\leq i\leq H$, and prove it for $i-1$.
  If $a_0\in\sigma_{\A'}^H(p)$, then for all $s_1$ we have $s_1\in\tau_{\A'}^H(p,a_0)$,
  so that there exists $a_1$ with $a_1\in\sigma_{\A'}^H(p\cdot a_0s_1)$.
  As $p\cdot a_0s_1$ is at depth $i$ we can conclude that
  $\forall s_1 \exists a_1 \forall s_2 \dots \forall s_{H-i+1},
  \ p\cdot a_0s_1a_1s_2\dots s_{H-i+1} \models \A$ by assumption.
  For the converse direction, the alternation of quantifiers states that $\A$
  can be guaranteed from $p$ by some deterministic strategy that starts
  by playing $a_0$, and therefore $a_0$ must be allowed by the strongly enforceable
  advice extracted from $\A$.
\end{proof}

Therefore, given a Boolean formula encoding $\A$, one can use
a Quantified Boolean Formula (QBF) solver to compute~$\sigma_{\A'}^H$, the strongly enforceable advice extracted from $\A$:
this computation can be used whenever MCTS performs an action selection step under the advice $\A'$,
as described in Section~\ref{sec:MCTSAdvice}.

The performance of this approach will crucially depend on the number of alternating quantifiers,
and in practice one may limit themselves to a smaller depth $h<H-i$ in this step,
so that safety is only guaranteed for the next $h$ steps.

Some properties can be inductively guaranteed, so that satisfying the QBF formula
of Lemma~\ref{lm:qbf-advice} with
a depth $H-i=1$ is enough to guarantee the property globally. For example, if there always exists
an action leading to states that are not bad,
it is enough to check for safety locally with a depth of 1.
This is the case in {\sc{Pac-Man}} for a deadlock-free layout when there is only one ghost.

\paragraph{Weighted sampling under a symbolic advice}
Given a symbolic advice $\A$ as a Boolean formula, and a probability distribution $w\in \Dist(\FPaths_M^H)$,
our goal is to sample paths of $M$ that satisfy $\A$ with respect to $w$.\footnote{The probability of a path $p$
being sampled should be equal to $w(p)/\sum_{p'|p'\models\A}w(p')$.}
Let $\omega$ denote a weight function over Boolean assignments that matches $w$.
This reduces our problem to the weighted sampling of satisfying assignments in a Boolean formula.
An exact solver for this problem may not be efficient, but one can
use the techniques of \cite{DBLP:conf/aaai/ChakrabortyFMSV14} to perform approximate sampling in polynomial time:
\begin{proposition}[{\cite{DBLP:conf/aaai/ChakrabortyFMSV14}}]\label{prop:WeightGen}
Given a CNF formula $\A$, a tolerance $\epsilon>0$ and a weight function $\omega$, we can construct a probabilistic algorithm which outputs a satisfying assignment $z$ such that for all $y$ that satisfies $\A$:
$$\frac{\omega(y)}{(1+\epsilon)\sum_{x\models \psi}\omega(x)}\leq Pr[z=y] \leq \frac{(1+\epsilon)\omega(y)}{\sum_{x\models \psi}\omega(x)}$$

The above algorithm occasionally `fails' (outputs no assignment even though there are satisfying assignments) but its failure probability can be bounded by any given $\delta$. Given an oracle for $SAT$, the above algorithm runs in time polynomial in $\ln\left(\frac{1}{\delta}\right)$, $|\psi|$, $\frac{1}{\epsilon}$ and $r$ where $r$ is the ratio between highest and lowest weight according to $\omega$.
\end{proposition}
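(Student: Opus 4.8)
The plan is to reconstruct the \textsf{WeightGen} procedure of \cite{DBLP:conf/aaai/ChakrabortyFMSV14}, which produces samples by slicing the set of satisfying assignments of $\A$ into small random ``cells'' carved out by parity (XOR) constraints and then enumerating a single cell exactly with the help of the \textsf{SAT} oracle. First I would rescale $\omega$ so that every satisfying assignment $y$ has $\omega(y)\in[1/r,1]$: this leaves the target distribution $\Pr[z=y]\propto\omega(y)$ unchanged, normalises the maximum weight to $1$, and turns the tilt $r$ into an explicit lower bound $1/r$ on individual weights. Next I would call the companion approximate weighted model counter \textsf{WeightMC} of the same paper to obtain, with confidence at least $1-\delta/2$, an estimate $\widehat W$ of the true total weight $W^\star=\sum_{x\models\A}\omega(x)$ that is within a constant multiplicative factor of $W^\star$; $\widehat W$ is used only to calibrate how many hash constraints to add.

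The core loop fixes a \emph{pivot} $\rho=\Theta(1/\epsilon^{2})$, picks an integer $q$ (tried over a small range around $\log(\widehat W/\rho)$) so that cutting the assignment space into $2^{q}$ cells makes a typical cell carry weight about $\rho$, and draws a hash function $h$ uniformly from a family of $3$-wise independent hash functions $\{0,1\}^{|V|}\to\{0,1\}^{q}$ — concretely a conjunction of $q$ random XOR constraints over $V$ — together with a uniform target $\alpha\in\{0,1\}^{q}$. Running the \textsf{SAT} oracle on $\A$ together with the parity constraints $h(x)=\alpha$, it enumerates the assignments of that cell, accumulating their weights until the cell is exhausted or the running total would exceed $(1+\epsilon)\rho$. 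If the cell weight $w_{\mathrm{cell}}$ lands in the window $[\rho/(1+\epsilon),(1+\epsilon)\rho]$, it outputs an assignment of the cell chosen with probability proportional to its weight (normalised by $(1+\epsilon)\rho$, the residual probability mass counting as a declared \emph{failure}); otherwise it declares failure. Iterating the whole procedure $\mathcal O(\ln(1/\delta))$ times and returning the first successful output drives the overall failure probability below $\delta$.

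The heart of the argument, and the step I expect to be the main obstacle, is a concentration estimate for the weight of a random cell. Fix a satisfying assignment $y$ and condition on the draw of $(h,\alpha)$: by uniformity of the hash family, $y$ falls in the chosen cell with probability exactly $2^{-q}$, and conditioned on that, $w_{\mathrm{cell}}=\omega(y)+Y$ where $Y$ is the total weight of the other assignments landing in the cell. By pairwise independence of $h$, $\mathbb E[Y]=(W^\star-\omega(y))\,2^{-q}$ and $\mathrm{Var}(Y)\le 2^{-q}\sum_{x}\omega(x)^{2}\le 2^{-q}W^\star$, using $\omega\le 1$; since $q$ is chosen so that $2^{-q}W^\star=\Theta(\rho)$ with $\rho=\Theta(1/\epsilon^{2})$, Chebyshev makes the relative deviation of $w_{\mathrm{cell}}$ from $\rho$ exceed $\epsilon$ with only a small constant probability, and a third-moment refinement controls the rare event that the truncation at $(1+\epsilon)\rho$ is triggered prematurely. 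The consequence is that essentially every accepted cell carries the same weight $(1\pm\mathcal O(\epsilon))\rho$ while there are $2^{q}\approx W^\star/\rho$ cells, so sampling weighted-uniformly inside an accepted cell coincides, up to a $(1\pm\mathcal O(\epsilon))$ factor, with sampling weighted-uniformly from the set of all satisfying assignments of $\A$; folding in the $\delta/2$ confidence slack of \textsf{WeightMC} yields the two-sided bound on $\Pr[z=y]$ in the statement. The complexity claim is then immediate: each iteration issues one \textsf{SAT} oracle call per enumerated assignment and stops after accumulating weight $(1+\epsilon)\rho$, hence after $\mathcal O(r\rho)=\mathcal O(r/\epsilon^{2})$ assignments (weights are $\ge 1/r$), on top of the polynomially many oracle calls of \textsf{WeightMC}; every formula handed to the oracle is a CNF-plus-XOR instance of size $|\A|+\mathcal O(q\,|V|)$ with $q\le|V|$; and the number of iterations is $\mathcal O(\ln(1/\delta))$. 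Altogether the running time is polynomial in $\ln(1/\delta)$, $|\A|$, $1/\epsilon$, and $r$, as claimed.
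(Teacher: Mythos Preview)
The paper does not prove this proposition at all: it is quoted verbatim as a black-box result from \cite{DBLP:conf/aaai/ChakrabortyFMSV14}, with no accompanying argument. There is therefore nothing in the paper to compare your proposal against.

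That said, your sketch is a reasonable reconstruction of the \textsf{WeightGen} procedure and its analysis as it appears in the cited work: the rescaling step, the preliminary call to \textsf{WeightMC}, the pivot $\rho=\Theta(1/\epsilon^{2})$, the $3$-wise independent XOR hashing, the bounded enumeration of a random cell, the Chebyshev-type concentration of cell weight, and the $\mathcal O(\ln(1/\delta))$ amplification are all faithful to the original. For the purposes of the present paper, however, none of this is needed or expected; the proposition is invoked purely as an imported tool.
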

In particular, this algorithm uses $\omega$ as a black-box, and thus
does not require precomputing the probabilities of all paths satisfying $\A$.
In our particular application of Proposition~\ref{prop:WeightGen}, the value $r$
can be bounded by $\left(\frac{p_{\max}|A|}{p_{\min}}\right)^H$ where
$p_{\min}$ and $p_{\max}$ are the smallest and greatest probabilities for
stochastic transitions in $M$.

Note that if we wish to sample from a given node $p$ of the search tree, we can force $p$ as a mandatory
prefix of satisfying assignments by fixing the truth value of relevant variables in the Boolean formula.

\section{A {\sc{Pac-Man}} case study}

We performed our experiments on the multi-agent game {\sc{Pac-Man}}, using the code of \cite{BerkeleyCS188}. The ghosts can have different strategies where they take actions based on their own position as well as position of Pac-Man. In our experiments, we used two different types of ghosts, %
the \emph{random ghosts} (in green) always choose an action uniformly at random from the legal actions available,
while the \emph{directional ghosts} (in red) take the legal action that minimises the Manhattan distance to Pac-Man with probability 0.9, and move randomly otherwise.

The game can be seen as a Markov decision process, where states encode a position for each agent\footnote{The last action played by ghosts should be stored as well, as they are not able to reverse their direction.} and for
the food pills in the grid, where actions encode individual Pac-Man moves, and where stochastic transitions
encode the moves of ghosts according to their probabilistic models.
For each state and action pair, we define a reward based on the score gained or lost by this move, as explained in the caption of Figure~\ref{fig:grids}.
We also assign a terminal reward to each state, so as to allow MCTS to compare paths of length $H$
which would otherwise obtain the same score.
Intuitively, better terminal rewards are given to states where Pac-Man
is closer to the food pills and further away from the ghosts, so that terminal rewards play the role
of a static evaluation of positions.

\paragraph{Experiments} We used a receding horizon $H=10$.
The baseline is given by a standard implementation of the algorithm described
in Section~\ref{sec:MCTSSimulation}. %
A search tree is constructed with a
maximum depth $H$, for $100$ iterations, so that the search tree constructed by the MCTS algorithm
contains up to $100$ nodes. At the first selection of every node,
$100$ samples are obtained by using a uniform policy. %
Overall, this represents a tiny portion of the tree unfolding of depth $10$,
which underlines the importance of properly guiding the search to the most interesting neighborhoods.
As a point of reference, we also had human players take control of Pac-Man,
and computed the same statistics.
The players had the ability to slow down the game as they saw fit,
as we aimed for a comparison between the quality of the strategical decisions made by these approaches,
and not of their reaction speeds.

We compare these baselines with the algorithm of Section~\ref{sec:MCTSAdvice},
using the following advice.
The \emph{simulation advice} $\psi$ that we consider is defined as a safety property
satisfied by every path such that Pac-Man
does not make contact with a ghost, as in Example~\ref{ex:pacman-simulation-advice}.
We provide a Boolean formula encoding $\psi$, so that one can use a SAT solver
to obtain samples, or sampling tools as described in Proposition~\ref{prop:WeightGen},
such as {\sc{WeightGen}}~\cite{DBLP:conf/aaai/ChakrabortyFMSV14}.
We use {\sc{UniGen}}~\cite{DBLP:conf/tacas/ChakrabortyFMSV15}
to sample almost uniformly over the satisfying assignments of $\psi$.\footnote{The distribution over path
is slightly different than when sampling uniformly over actions in the pruned MDP $T(M,s_0,H,\psi)$, but {\sc{UniGen}} enjoys better performances than {\sc{WeightGen}}.}

From this simulation advice, we extract whenever possible a strongly enforceable \emph{selection advice} $\varphi$
that guarantees that Pac-Man will not make contact with a ghost,
as described in Example~\ref{ex:pacman-selection-advice}.
If safety cannot be enforced, $\top$ is used as a selection advice, so that no pruning is performed.
This is implemented by using the Boolean formula $\psi$ in a QBF solver
according to Lemma~\ref{lm:qbf-advice}.
For performance reasons, we guarantee safety for a smaller horizon $h<10$,
that we fixed at $3$ in our experiments.

Several techniques were used to reduce the state-space of the MDP
in order to obtain
smaller formul\ae. For example,
a ghost that is too far away with respect to $H$ or $h$ can be safely ignored,
and the current positions of the food pills is not relevant for safety.

\paragraph{Results}
For each experiment, we ran $100$ games in a high-end cluster using AMD Opteron Processors 6272 at 2.1 GHz.
\begin{table}
  \centering
  \resizebox{\linewidth}{!}{
  \begin{tabular}{|c|c|c|c|c|c|c|c|}
    \hline
    Grid & Ghosts & Algorithm & win & loss & draw & food & score \\% & time\\
    \hline
    & & MCTS & 17 & 59 & 24 & 16.65 & -215.32 \\% & 157.03\\
    \cline{3-8}
     & & MCTS+Selection advice & 25 & 54 & 21 & 17.84 & -146.44 \\% & 291.89\\
    \cline{3-8}
     & 4 x Random & MCTS+Simulation advice & 71 & 29 & 0 & 22.11 & 291.80 \\% & 299.88\\
    \cline{3-8}
     &  & MCTS+both advice & 85 & 15 & 0 & 23.42 & 468.74 \\% & 531.07\\
    \cline{3-8}
     9 x 21 & & Human & 44 & 56 & 0 & 18.87 & 57.76 \\% & * \\
    \cline{2-8}
     &  & MCTS & 11 & 85 & 4 & 14.86 & -339.99 \\% & 129.22\\
    \cline{3-8}
     & 1 x Directional & MCTS+Selection advice & 16 & 82 & 2 & 15.25 & -290.6 \\% & 328.54 \\
    \cline{3-8}
     & + & MCTS+Simulation advice & 27 & 70 & 3 & 17.14 & -146.79 \\% & 389.94 \\
    \cline{3-8}
     & 3 x Random & MCTS+both advice & 33 & 66 & 1 & 17.84 & -92.47 \\% & 684.45\\
    \cline{3-8}
     &  & Human & 24 & 76 & 0 & 15.10 & -166.28 \\% & * \\
        \hline
    27 x 28 & 4 x Random & MCTS & 1 & 10 & 89 & 14.85 & -182.77 \\% & 331.95\\
    \cline{3-8}
      & & MCTS+both advice & 95 & 5 & 0 & 24.10 & 517.04 \\% & 528.81\\
    \hline
  \end{tabular}}
  \caption{Summary of experiments with different ghost models, algorithms and grid size. The win, loss and draw columns denote win/loss/draw rates in percents
  (the game ends in a draw after 300 game steps).
  The food eaten column refers to the number of food pills eaten on average, out of 25 food pills in total. Score refers to the average score obtained over all runs. %
  }
  \label{table:exp}
\end{table}
A summary of our results is displayed in Table~\ref{table:exp}. We mainly use the number of games won out of 100
to evaluate the performance of our algorithms.\footnote{We do not evaluate the accuracy in terms of making optimal choices because those cannot be computed due to the size of the MDPs (about $10^{16}$ states).}
In the small grid with four random ghosts, the baseline MCTS algorithm wins 17\% of games.
Adding the selection advice results in a slight increase of the win rate to 25\%.
The average score is improved as expected, but even if one ignores the $\pm 500$ score associated with a win or a loss,
we observe that more food pills were eaten on average as well.
The simulation advice provides in turn a sizeable increase in both win rate (achieving 71\%)
and average score.
Using both advice at the same time gave the best results overall, with a win rate of 85\%.
The same observations can be made in other settings as well,
either with a directional ghost model or on a large grid.
Moreover, the simulation advice significantly reduces the number of game turns Pac-Man needs to win,
resulting in fewer game draws, most notably on the large grid.

In the baseline experiments without any advice, computing the next action played by Pac-Man from a given state takes 200 seconds of runtime on average.\footnote{This holds for both the small and large grids, as in both cases we consider the next 10 game steps only, resulting in MCTS trees of similar size.} Comparatively, the algorithm with both advice is about three times slower than the baseline.
If we make use of the same time budget in the standard MCTS algorithm (roughly increasing the number of
nodes in the MCTS tree threefold), the win rate climbs to 26\%, which is still significantly below the 85\% win rate achieved with advice.
Moreover, while this experiment was not designed to optimise the performance of these approaches
in terms of computing time, we were able to achieve a performance of 5s per move on a standard laptop
by reducing the number of iterations and samples in MCTS.
This came at the cost of a decreased win-rate of 76\% with both advice.
Further code improvements \textit{e.g.}~using parallelism as in \cite{Chaslot08} could reasonably lead to real-time performances.

Supplementary material is available at \href{http://di.ulb.ac.be/verif/debraj/pacman/}{http://di.ulb.ac.be/verif/debraj/pacman/}.

\section{Conclusion and future works}

In this paper, we have introduced the notion of symbolic advice to guide the selection and the simulation phases of the MCTS algorithm. We have identified sufficient conditions to preserve the convergence guarantees offered by the MCTS algorithm while using symbolic advice.  We have also explained how to implement them using SAT and QBF solvers in order to apply symbolic advice to large MDP defined symbolically rather than explicitly. We believe that the generality, flexibility and precision offered by logical formalism to express symbolic advice in MCTS can be used as the basis of a methodology to systematically inject domain knowledge into MCTS. We have shown that domain knowledge expressed as simple symbolic advice (safety properties) improves greatly the efficiency of the MCTS algorithm in the {\sc Pac-Man} application. This application is challenging as the underlying MDPs have huge state spaces, \textit{i.e.}~up to $10^{23}$ states. In this application, symbolic advice allow the MCTS algorithm to reach or even surpass human level in playing.

As further work, we plan to offer a compiler from LTL to symbolic advice, in order to automate their integration in the MCTS algorithm for diverse application domains. We also plan to work on the efficiency of the implementation. So far, we have developed a prototype implementation written in Python (an interpreted language). This implementation cannot be used to evaluate performances in absolute terms but it was useful to show that if the same amount of resources is allocated to the two algorithms the one with advice performs much better. We believe that by
using a well-optimised code base
and by exploiting parallelism, we should be able to apply our algorithm in real-time and preserve the level of quality reported in the experimental section. Finally, we plan to study how learning can be incorporated in our framework. One natural option is to replace the static reward function used after $H$ steps by a function learned from previous runs of the algorithm and implemented using a neural network (as it is done in AlphaGo~\cite{DBLP:journals/nature/SilverHMGSDSAPL16} for example).

We thank Gilles Geeraerts for fruitful discussions in the early phases of this work.

\bibliographystyle{plainurl}%

\newpage
\appendix

\section{Markov decision processes}\label{app:mdp}

One can make an MDP strongly aperiodic without changing the optimal expected average reward
and its optimal strategies with the following transition:
\begin{definition}[Aperiodic transformation]\cite[Section 8.5.4]{DBLP:books/wi/Puterman94}
  For an MDP $M=(S,A,P,R)$, we define a new MDP $M_\alpha=(S,A,P_\alpha,R_\alpha)$ for $0<\alpha<1$, with
  $R_\alpha(s,a)=R(s,a)$, $P_\alpha(s,a)(s)=\alpha+(1-\alpha)P(s,a)(s)$ and $P_\alpha(s,a)(s')=(1-\alpha) P(s,a)(s')$.
  Notice that $M_\alpha$ is strongly aperiodic.
\end{definition}

Every finite path in $M$ is also in $M_{\alpha}$. Thus for a strategy $\widehat \sigma$ in $M_\alpha$, there is a $\sigma$ in $M$ whose domain is restricted to the paths in $M$.

\begin{proposition}\cite[Section 8.5.4]{DBLP:books/wi/Puterman94}
  Let $M$ be an MDP. $M_\alpha$ is a new MDP generated by applying the aperiodic transformation mentioned above. Then the set of memoryless strategies that optimises the expected average reward in $M_{\alpha}$ is the same as the set of memoryless strategies the optimises the expected average reward in $M$.
  Also from any $s$, $\Val_M(s)=\Val_{M_\alpha}(s)$.
\end{proposition}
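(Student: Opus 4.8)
The plan is to reduce the statement to a claim about a single memoryless strategy $\sigma$, and in particular to the Cesàro (equivalently Abel) limit of the transition matrix induced by $\sigma$, showing that this limit is unchanged by the aperiodic transformation. Fix a memoryless $\sigma$ and write $P_\sigma(s,s')=\sum_a\sigma(s)(a)P(s,a)(s')$ for the induced stochastic matrix and $\bar r_\sigma(s)=\sum_a\sigma(s)(a)R(s,a)$ for the expected one-step reward vector. Unfolding the definition of $\Val_M(s,\sigma)$ and using that $R_T$ is bounded (so the term $\tfrac1n\E[R_T(s_n)]$ is $\mathcal O(1/n)$), one gets $\Val_M(s,\sigma)=\liminf_n \tfrac1n\sum_{t=0}^{n-1}(\delta_s P_\sigma^{t})\bar r_\sigma$, where $\delta_s$ is the Dirac row vector at $s$. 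Since the Cesàro averages $\tfrac1n\sum_{t=0}^{n-1}P_\sigma^{t}$ of a finite Markov chain converge to a stochastic matrix $P_\sigma^{*}$, the $\liminf$ is a genuine limit and $\Val_M(s,\sigma)=(\delta_s P_\sigma^{*})\bar r_\sigma$; the identical computation in $M_\alpha$ gives $\Val_{M_\alpha}(s,\sigma)=(\delta_s P_{\alpha,\sigma}^{*})\bar r_{\alpha,\sigma}$. As $R_\alpha=R$ we have $\bar r_{\alpha,\sigma}=\bar r_\sigma$, so it remains to prove $P_{\alpha,\sigma}^{*}=P_\sigma^{*}$.

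For that, I would use the resolvent characterisation $P^{*}=\lim_{\lambda\to1^-}(1-\lambda)(I-\lambda P)^{-1}$. From the definition of the transformation, $P_{\alpha,\sigma}=\alpha I+(1-\alpha)P_\sigma$, hence $I-\lambda P_{\alpha,\sigma}=(1-\lambda\alpha)\bigl(I-\mu P_\sigma\bigr)$ with $\mu=\mu(\lambda)=\tfrac{\lambda(1-\alpha)}{1-\lambda\alpha}\in[0,1)$, and a one-line computation gives $1-\mu=\tfrac{1-\lambda}{1-\lambda\alpha}$, so that $(1-\lambda)(I-\lambda P_{\alpha,\sigma})^{-1}=(1-\mu)(I-\mu P_\sigma)^{-1}$. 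Letting $\lambda\to1^-$ forces $\mu(\lambda)\to1^-$, and passing to the limit on both sides yields $P_{\alpha,\sigma}^{*}=P_\sigma^{*}$. (A more combinatorial alternative: from $\pi(I-P_{\alpha,\sigma})=(1-\alpha)\pi(I-P_\sigma)$ the two chains have the same stationary distributions, and since the transformation only adds self-loops it preserves the reachability structure, hence the recurrent classes and the absorption probabilities of transient states into them, which together pin down $P^{*}$.)

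Combining the two steps gives $\Val_M(s,\sigma)=\Val_{M_\alpha}(s,\sigma)$ for every memoryless $\sigma$ and every state $s$. Taking the supremum over memoryless strategies and invoking that the optimal expected average reward is attained by a deterministic memoryless strategy (Proposition~6.2.1 of Puterman, cited earlier) gives $\Val_M(s)=\Val_{M_\alpha}(s)$ for all $s$; and since the per-strategy values agree, a memoryless strategy is optimal for the expected average reward in $M$ if and only if it is optimal in $M_\alpha$, which is exactly the claimed equality of the two sets. The only non-routine ingredient is the identity $P_{\alpha,\sigma}^{*}=P_\sigma^{*}$; the rest is bookkeeping with the definitions, so I expect that identity (and justifying the Abel--Cesàro limit it relies on) to be the main point requiring care.
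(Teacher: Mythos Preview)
The paper does not actually prove this proposition: it is stated with a bare citation to \cite[Section 8.5.4]{DBLP:books/wi/Puterman94} and no argument is given in the body or the appendix. So there is nothing in the paper to compare your proof against; you are supplying a proof where the authors chose to defer to the literature.

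Your argument is correct. The identification $P_{\alpha,\sigma}=\alpha I+(1-\alpha)P_\sigma$, the resolvent computation $(1-\lambda)(I-\lambda P_{\alpha,\sigma})^{-1}=(1-\mu)(I-\mu P_\sigma)^{-1}$ with $\mu=\lambda(1-\alpha)/(1-\lambda\alpha)$, and the limit $\lambda\to 1^-\Rightarrow\mu\to 1^-$ are all right, and together with the Abel--Ces\`aro characterisation of $P^*$ they yield $P_{\alpha,\sigma}^{*}=P_\sigma^{*}$. The conclusion $\Val_M(s,\sigma)=\Val_{M_\alpha}(s,\sigma)$ for every memoryless $\sigma$ then follows, and the reduction to memoryless strategies via \cite[Proposition~6.2.1]{DBLP:books/wi/Puterman94} finishes both claims. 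Your alternative ``combinatorial'' remark (same stationary distributions, same communicating and recurrent structure, hence same $P^*$) is also sound. For what it is worth, Puterman's own treatment in Section~8.5.4 is phrased in terms of the gain--bias evaluation equations rather than the resolvent: one checks that a pair $(g,h)$ solving $g+h=r_\sigma+P_\sigma h$ with $g=P_\sigma g$ corresponds to a pair $(g,h/(1-\alpha))$ solving the transformed equations, so the gain $g$ is unchanged. Your resolvent route is a clean and arguably more direct alternative.
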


\begin{definition}[Finite horizon unfolding of an MDP]
  For an MDP $M=(S,A,P,R,R_T)$, a horizon depth $H\in\mathbb N$ and a state $s_0$, the unfolding of $M$
  from $s_0$ and with horizon $H$ is a tree-shaped MDP defined as
  $T(M,s_0,H)=(S'=S_0\cup\dots\cup S_H,A,P',R',R'_T)$, where
  for all $i\in[0,H]$, $S_i=\FPaths^i(s_0)$.
  The mappings $P'$, $R'$ and $R'_T$ are inherited from $P$, $R$ and $R'_T$ in a natural way
  with additional self-loops at the leaves of the unfolding,
  so that for all $i\in[0,H]$, $p\in S_i$, $a\in A$ and $p'\in S'$,
  \begin{align*}
  P'(p,a)(p')=&\begin{cases}
  P(\last(p),a)(\last(p')) &\text{if }i<H\text{ and }\exists s'\in S, p'=p\cdot as'\\
  1 &\text{if }i=H\text{ and }p'=p\\
  0 &\text{otherwise,}
\end{cases}\\
  R'(p,a)=&\begin{cases}
  R(\last(p),a) &\text{if }i<H\\
  0 &\text{otherwise.}
\end{cases}\\
  R'_T(p)=&R_T(\last(p))%
\end{align*}
\end{definition}

\begin{proof}[Proof of Lemma~\ref{lm:unfolding}]
  Let us prove that for all $i\in[0,H]$ and all $p\in S_i$,
  \begin{itemize}
    \item $\Val^{H-i}_{M}(\last(p))=\Val^{H-i}_{T(M,s_0,H)}(p)$, and
    \item $\opt^{H-i}_{M}(\last(p))=\opt^{H-i}_{T(M,s_0,H)}(p)$.
  \end{itemize}
  We prove the first statement by induction on $H-i$. For $H-i=0$, for all $p\in S_i$,
  $\Val^{H-i}_M(\last(p))=\Val^{H-i}_{T(M,s_0,H)}(p)=R_T(\last(p))$.
  Assume the statement is true for $H-i=k$, so that for all $p\in S_{H-k}$,
  $\Val^k_M(\last(p))=\Val^k_{T(M,s_0,H)}(p)$.
  Then for all $p \in S_{H-k-1}$, we have for all $a\in A$ and $s\in Supp(P(\last(p),a))$,
  $\Val^k_M(s)=\Val^k_{T(M,s_0,H)}(p\cdot a s)$.
  It follows that
  \begin{align*}
    \Val^{k+1}_M(\last(p))&=\max_{a\in A}(R(\last(p),a)+\sum_sP(\last(p),a)\Val^k_M(s))\\
    &=\max_{a\in A}(R(\last(p),a)+\sum_sP(\last(p),a)\Val^k_{T(M,s_0,H)}(p\cdot as))\\
    &=\Val^{k+1}_{T(M,s_0,H)}(p)\,.
  \end{align*}
  From $\Val^{H-i}_{M}(\last(p))=\Val^{H-i}_{T(M,s_0,H)}(p)$ and $\opt^H_M(\last(p))=\argmax_{a\in A}(R(\last(p),a)+\sum_sP(\last(p),a)\Val^{H-1}_M(s))$ we derive $\opt^{H-i}_{M}(\last(p))=\opt^{H-i}_{T(M,s_0,H)}(p)$.
\end{proof}

\section{UCB}\label{app:UCB}

Let $\overline X_{a,n}=\frac{1}{n}\sum_{t=1}^{n}X_{a,t}$ denote the average of
the first $n$ plays of action $a$.
Let $\mu_{a,n}=\E[\overline X_{a,n}]$.
We assume that these expected means eventually converge,
and let $\mu_a=\lim_{n\to\infty}\mu_{a,n}$.

\begin{definition}[Drift conditions]\label{def:drift-app}
\item For all $a\in A$, the sequence $(\mu_{a,n})_{n\geq 1}$ converges to some value $\mu_a$
\item There exists a constant $C_p>0$ and an integer $N_p$ such that for $n\geq N_p$ and any $\delta>0$, $\Delta_n(\delta)=C_p\sqrt{n\ln(1/\delta)}$, the following bounds hold:
	$$\Pr\Big[n\overline X_{a,n}\geq n \mu_{a,n} +\Delta_n(\delta)\Big]\leq \delta\,,$$
	$$\Pr\Big[n\overline X_{a,n}\leq n \mu_{a,n} -\Delta_n(\delta)\Big]\leq \delta\,.$$
\end{definition}

We define $\delta_{a,n}=\mu_{a,n}-\mu_a$.
Then, $\mu^*$, $\mu^*_n$, $\delta^*_n$ are defined as $\mu_j$, $\mu_{j,n}$,
$\delta_{j,n}$ where $j$ is the optimal action.\footnote{It is assumed for simplicity
that a single action $a$ is optimal, \textit{i.e.}~maximises $\E[X_{a,n}]$
for $n$ large enough.}
Moreover, let $\Delta_a=\mu^*-\mu_a$.

As $\delta_{a,n}$ converges to $0$ by assumption, for all $\epsilon>0$
there exists $N_0(\epsilon)\in\N$, such that for $t>N_0(\epsilon)$,
then $2|\delta_{a,t}|\leq \epsilon\Delta_a$ and $2|\delta^*_{t}|\leq \epsilon\Delta_a$
for all all suboptimal actions $a\in A$.

The authors start by bounding the number of time a suboptimal action is played:
 \begin{theorem}[{\cite[Theorem~1]{DBLP:conf/ecml/KocsisS06}}]\label{thm:suboptimal-upper-bound}
  Consider UCB1 applied to a non-stationary bandit problem
  with $c_{t,s}=2C_p\sqrt{\frac{\ln t}{s}}$. Fix $\epsilon>0$.
  Let $T_a(n)$ denote number of times action $a$ has been played at time $n$.
  Then under the drift conditions, there exists $N_p$ such that
  for all suboptimal actions $a\in A$,
	$$\mathbb E[T_a(n)]\leq \frac{16C^2_p\ln n}{(1-\epsilon)^2\Delta_a^2}+N_0(\epsilon)+N_p+1+\frac{\pi^2}{3}\,.$$
\end{theorem}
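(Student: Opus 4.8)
The plan is to adapt the classical finite-time analysis of UCB1 of Auer, Cesa-Bianchi and Fischer to the non-stationary setting, using the drift conditions of Definition~\ref{def:drift-app} in place of Hoeffding's inequality. Fix a suboptimal action $a$, let $j$ denote the (assumed unique) optimal action, and write $I_t$ for the action played at step $t$. First I would use the standard counting decomposition: for every integer $\ell\geq1$,
\begin{equation*}
T_a(n)\ \leq\ \ell+\sum_{t=\ell+1}^{n}\mathbf{1}\{I_t=a,\ T_a(t-1)\geq\ell\}\,.
\end{equation*}
Since UCB1 with $c_{t,s}=2C_p\sqrt{(\ln t)/s}$ selects $a$ at step $t$ only if $\overline X_{a,T_a(t-1)}+c_{t-1,T_a(t-1)}\geq\overline X_{j,T_j(t-1)}+c_{t-1,T_j(t-1)}$, writing $s_j=T_j(t-1)$ and $s_a=T_a(t-1)$, the event $\{I_t=a,\ T_a(t-1)\geq\ell\}$ forces at least one of
\begin{equation*}
\overline X_{j,s_j}\leq\mu^*+\delta^*_{s_j}-c_{t-1,s_j}\,,\quad
\overline X_{a,s_a}\geq\mu_a+\delta_{a,s_a}+c_{t-1,s_a}\,,\quad
\mu^*<\mu_a+\delta_{a,s_a}-\delta^*_{s_j}+2c_{t-1,s_a}\,,
\end{equation*}
because if none of the three held, chaining the resulting inequalities would contradict the selection rule.

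The next step is to dispose of the third event. Once $s_a$ and $s_j$ exceed $N_0(\epsilon)$, the drift terms obey $|\delta_{a,s_a}|,|\delta^*_{s_j}|\leq\epsilon\Delta_a/2$, so the third event implies $(1-\epsilon)\Delta_a<2c_{t-1,s_a}$, that is $s_a<16C_p^2(\ln(t-1))/((1-\epsilon)^2\Delta_a^2)$, which cannot happen for $t\leq n$ once $\ell$ is taken equal to $\lceil 16C_p^2(\ln n)/((1-\epsilon)^2\Delta_a^2)\rceil$ (and no smaller than $N_0(\epsilon)$ and $N_p$, so that the estimates used below are valid). For the first two events I would appeal to the tail inequalities of Definition~\ref{def:drift-app}: because $s\,c_{t,s}=2C_p\sqrt{s\ln t}$ equals $\Delta_s(t^{-4})$, each of these events has probability at most $t^{-4}$ whenever the count involved is at least $N_p$. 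Taking a union bound over $s_j,s_a\in\{1,\dots,t-1\}$ and summing over $t$ contributes $\sum_{t\geq1}2(t-1)^2t^{-4}\leq2\sum_{t\geq1}t^{-2}=\pi^2/3$. Collecting the ceiling term (the $+1$), the offset $N_0(\epsilon)+N_p$ absorbing the finitely many early rounds where some count is below these thresholds, and this $\pi^2/3$, one obtains $\E[T_a(n)]\leq\ell+\pi^2/3$ and hence the stated bound.

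The main obstacle is precisely the non-stationarity. In the stationary UCB1 proof one compares empirical means directly against a single fixed mean, whereas here every inequality must be phrased in terms of the moving means $\mu_{a,s}$, and the deviations $\delta_{a,s}=\mu_{a,s}-\mu_a$ and $\delta^*_s$ have to be controlled by an $\epsilon$-fraction of the gap $\Delta_a$ --- this is the source both of the $1/(1-\epsilon)^2$ factor and of the additive $N_0(\epsilon)$. A secondary nuisance is that the drift tail inequalities only hold once the sample count reaches $N_p$, so the first few rounds must be split off and charged to the constants; verifying that all of these thresholds combine additively rather than multiplicatively into the stated form is the delicate bookkeeping, but no idea beyond the stationary argument is required once the drift conditions stand in for Hoeffding.
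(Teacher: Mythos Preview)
The paper does not give its own proof of this statement: it is quoted verbatim from Kocsis--Szepesv\'ari and only \emph{used} (together with Theorems~\ref{thm:bias-bound}--\ref{thm:failure-prob}) in the inductive argument for Lemma~\ref{lm:MAB} and Theorem~\ref{thm:MCTS}. Your sketch is exactly the Kocsis--Szepesv\'ari argument, namely the Auer--Cesa-Bianchi--Fischer counting decomposition with Hoeffding replaced by the drift tail inequalities and with the extra drift terms $\delta_{a,s_a},\delta^*_{s_j}$ absorbed via $N_0(\epsilon)$; so approach and outcome coincide with the reference the paper defers to.

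One bookkeeping point you gloss over deserves a word. You rule out the third event by assuming \emph{both} $s_a\geq N_0(\epsilon)$ and $s_j\geq N_0(\epsilon)$, but the decomposition $T_a(n)\leq\ell+\sum_t\mathbf{1}\{I_t=a,\ T_a(t-1)\geq\ell\}$ only forces $s_a\geq\ell$; nothing in it lower-bounds $s_j=T_j(t-1)$. Your sentence about ``the offset $N_0(\epsilon)+N_p$ absorbing the finitely many early rounds where some count is below these thresholds'' is precisely where this has to be made rigorous, and it is the one place where the non-stationary proof is genuinely more delicate than the stationary one. It can be handled (and Kocsis--Szepesv\'ari do handle it), but it is not automatic: one does not directly control how many times $a$ can be selected while $T_j(t-1)<N_0(\epsilon)$, so the argument requires either restricting the inner union over $s_j$ and accounting separately for small $s_j$, or arguing via the size of the exploration bonus $c_{t-1,s_j}$ that $a$ cannot beat $j$ when $s_j$ is tiny. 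If you want the sketch to be self-contained, spell out this step; otherwise your outline is correct and matches the intended proof.
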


Let $\overline X_{n}=\sum_{a\in A}\frac{T_a(n)}{n}\bar{X}_{a,T_a(n)}$ denote
the global average of payoffs received up to time $n$

Then, one can bound the difference between $\mu^*$ and $\overline X_{n}$:
\begin{theorem}[{\cite[Theorem~2]{DBLP:conf/ecml/KocsisS06}}]\label{thm:bias-bound}
	Under the drift conditions of Definition~\ref{def:drift-app}, it holds that
  $$|\E[\overline X_n]-\mu^*|\leq |\delta^*_n|+O\left(\frac{|A|(C_p^2\ln n+N_0(1/2))}{n}\right)\,.$$
\end{theorem}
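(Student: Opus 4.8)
The plan is to compare the global average $\overline X_n$ against the \emph{idealized} average one would obtain by imagining that the optimal arm $j$ was played at every one of the first $n$ steps, and to absorb the mismatch into the number of suboptimal pulls, which is logarithmic by Theorem~\ref{thm:suboptimal-upper-bound}. First I would rewrite $n\overline X_n=\sum_{a\in A}T_a(n)\,\overline X_{a,T_a(n)}=\sum_{a\in A}\sum_{t=1}^{T_a(n)}X_{a,t}$ and split off the optimal arm to obtain
$$n\overline X_n-\sum_{t=1}^{n}X_{j,t}=\sum_{a\neq j}\sum_{t=1}^{T_a(n)}X_{a,t}-\sum_{t=T_j(n)+1}^{n}X_{j,t}\,.$$
Since all payoffs lie in $[0,1]$ and $T_j(n)\leq n$, both sums on the right are nonnegative and bounded by $n-T_j(n)=\sum_{a\neq j}T_a(n)$, so their difference has absolute value at most $\sum_{a\neq j}T_a(n)$; taking expectations gives $\bigl|\E[n\overline X_n]-\E[\sum_{t=1}^{n}X_{j,t}]\bigr|\leq\sum_{a\neq j}\E[T_a(n)]$.

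Next I would evaluate the idealized term exactly: by definition $\overline X_{j,n}=\frac1n\sum_{t=1}^{n}X_{j,t}$ and $\mu_{j,n}=\E[\overline X_{j,n}]$, hence $\E[\sum_{t=1}^{n}X_{j,t}]=n\mu_{j,n}=n(\mu^*+\delta^*_n)$, with no conditioning needed since the drift conditions of Definition~\ref{def:drift-app} concern the intrinsic payoff sequence $(X_{j,t})_{t\geq1}$ rather than the algorithm's schedule. Dividing the inequality of the previous paragraph by $n$ and applying the triangle inequality then yields
$$\bigl|\E[\overline X_n]-\mu^*\bigr|\leq|\delta^*_n|+\frac1n\sum_{a\neq j}\E[T_a(n)]\,.$$

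Finally I would bound the error term using Theorem~\ref{thm:suboptimal-upper-bound} instantiated at $\epsilon=1/2$: each suboptimal $\E[T_a(n)]$ is at most $\frac{64C_p^2\ln n}{\Delta_a^2}+N_0(1/2)+N_p+1+\frac{\pi^2}{3}$, so summing over the at most $|A|$ suboptimal actions and treating $C_p$, $\Delta_a$ and $N_p$ as constants gives $\frac1n\sum_{a\neq j}\E[T_a(n)]=O\bigl(\frac{|A|(C_p^2\ln n+N_0(1/2))}{n}\bigr)$, which is exactly the claimed bound.

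The step I expect to require the most care is the comparison in the first paragraph. The natural decomposition $\overline X_n=\sum_{a}\frac{T_a(n)}{n}\overline X_{a,T_a(n)}$ puts the random count $T_j(n)$ inside the index of $\overline X_{j,\cdot}$, so a direct estimate would have to control $\E[\delta^*_{T_j(n)}]$, which need not be close to $\delta^*_n$ without extra regularity of the sequence $(\delta^*_m)_m$. Comparing instead to the deterministic-length sum $\sum_{t=1}^{n}X_{j,t}$ sidesteps this, as the leftover block $\sum_{t=T_j(n)+1}^{n}X_{j,t}$ is then controlled purely by boundedness of the payoffs; one just needs to check that this rewriting is legitimate (it is, since $T_j(n)\leq n$ always) and that it is $T_a(n)$ for $a\neq j$ — not $T_j(n)$ — that Theorem~\ref{thm:suboptimal-upper-bound} bounds.
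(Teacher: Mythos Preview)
The paper does not actually prove this statement: Theorem~\ref{thm:bias-bound} is merely recalled in Appendix~\ref{app:UCB} with a citation to \cite[Theorem~2]{DBLP:conf/ecml/KocsisS06}, and no argument is given. So there is no ``paper's own proof'' to compare against.

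That said, your argument is correct and is essentially the standard one. The only point worth flagging is the step $\E\bigl[\sum_{t=1}^n X_{j,t}\bigr]=n\mu_{j,n}$: this is fine precisely because the payoff sequence $(X_{j,t})_{t\ge 1}$ is assumed to exist as a sequence of random variables independently of the algorithm's schedule, so the ``unobserved'' summands $X_{j,t}$ for $t>T_j(n)$ are legitimate and linearity of expectation applies; you acknowledge this, and your pointwise bound $\bigl|n\overline X_n-\sum_{t=1}^n X_{j,t}\bigr|\le\sum_{a\neq j}T_a(n)$ avoids any need to control the correlation between $T_j(n)$ and those summands. The instantiation $\epsilon=1/2$ in Theorem~\ref{thm:suboptimal-upper-bound} and the absorption of the instance-dependent constants $\Delta_a^{-2}$ and $N_p$ into the $O(\cdot)$ are exactly how the stated form of the bound arises.
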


The following theorem shows that the number of times an action is played can be lower bounded:
\begin{theorem}[{\cite[Theorem~3]{DBLP:conf/ecml/KocsisS06}}]\label{thm:lower-bound}
	Under the drift conditions of Definition~\ref{def:drift-app}, there exists some positive constant $\rho$ such that after $n$ iterations for all action $a$, $T_a(n)\geq \lceil\rho\ln(n)\rceil$.
\end{theorem}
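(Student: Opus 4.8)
The plan is to prove the lower bound from the mechanics of the selection rule alone, without appealing to any statistical property: the confidence radius $c_{t,s}=2C_p\sqrt{(\ln t)/s}$ is decreasing in the play count $s$ and increasing in $t$, so an action selected only logarithmically often carries a bonus that eventually dominates and forces it to be picked. Since payoffs lie in $[0,1]$, the average reward of a competing, heavily played action can beat a rarely played one by at most $1$, and this gap is overcome as soon as the heavily played action has accumulated enough samples to push its own bonus below roughly $1$. This mirrors the proof of \cite[Theorem~3]{DBLP:conf/ecml/KocsisS06}.

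Concretely, I would fix $n$ large and assume toward a contradiction that $T_a(n)<\rho\ln n$ for some action $a$ (this is implied by $T_a(n)<\lceil\rho\ln n\rceil$, since $T_a(n)$ is an integer). As $\sum_{b\in A}T_b(n)=n$ with $|A|$ finite, there is an action $a^+$ with $T_{a^+}(n)\geq n/|A|$, and for $n$ large this forces $a^+\neq a$. Let $t_0+1\leq n$ be the last step at which $a^+$ is selected, so $t_0\geq T_{a^+}(n)-1\geq n/|A|-1$ and $T_{a^+}(t_0)=T_{a^+}(n)-1$. Because UCB1 preferred $a^+$ to $a$ at step $t_0+1$,
$$\overline X_{a^+,T_{a^+}(t_0)}+c_{t_0,T_{a^+}(t_0)}\ \geq\ \overline X_{a,T_a(t_0)}+c_{t_0,T_a(t_0)}\,.$$
Bounding $\overline X_{a^+,\cdot}\leq 1$, $\overline X_{a,\cdot}\geq 0$ and $c_{t_0,T_{a^+}(t_0)}\leq 2C_p\sqrt{(\ln n)/(n/|A|-1)}=:\kappa_n$ (which tends to $0$), I obtain $c_{t_0,T_a(t_0)}\leq 1+\kappa_n$, hence $T_a(t_0)\geq 4C_p^2(\ln t_0)/(1+\kappa_n)^2$. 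Counts being nondecreasing, $T_a(n)\geq T_a(t_0)$, and since $\ln t_0\geq\ln(n/|A|-1)\geq\tfrac12\ln n$ for $n$ large while $(1+\kappa_n)^2\leq 4$ there, this yields $T_a(n)\geq\tfrac12 C_p^2\ln n$, contradicting $T_a(n)<\rho\ln n$ as soon as $\rho<\tfrac12 C_p^2$. Taking $\rho$ of order $C_p^2$ (and smaller if needed to absorb the ceiling) and minimising over the finitely many actions yields a single constant valid for all $n$ beyond some threshold $N$; the finitely many remaining values of $n$ are handled by the fact that UCB1 plays every action once at initialisation, so that after shrinking $\rho$ so the ceiling equals $1$ in that range, $T_a(n)\geq 1=\lceil\rho\ln n\rceil$.

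The main obstacle I anticipate is purely a matter of constant bookkeeping: one must pin down a concrete $N$ beyond which $\kappa_n\leq 1$, the estimate $\ln(n/|A|-1)\geq\tfrac12\ln n$ holds, and the linear-versus-logarithmic comparison $n/|A|\geq\rho\ln n$ is valid, and then check that the same (sufficiently small) $\rho$ also makes the claim trivial for $n<N$ via the initialisation phase. Everything else — the pigeonhole choice of $a^+$, the single application of the selection inequality at step $t_0+1$, and the monotonicity of $c_{t,s}$ in $s$ — is elementary and does not even use the drift conditions, which are carried along only because the surrounding development of \cite{DBLP:conf/ecml/KocsisS06} assumes them throughout.
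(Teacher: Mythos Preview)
The paper does not give its own proof of this statement: Theorem~\ref{thm:lower-bound} is merely recalled in Appendix~\ref{app:UCB} as a result of \cite{DBLP:conf/ecml/KocsisS06}, with no argument supplied. Your proposal therefore cannot be compared against a proof in the paper, only against the original reference.

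That said, your sketch is correct and is essentially the argument of \cite[Theorem~3]{DBLP:conf/ecml/KocsisS06}: pick a heavily played arm $a^+$ by pigeonhole, look at the last time it was selected, and use the UCB inequality together with the crude bounds $0\leq \overline X_{\cdot,\cdot}\leq 1$ to force a lower bound of order $C_p^2\ln n$ on the play count of any other arm. Your observation that the drift conditions are not actually used in this step is also accurate; the bound is purely deterministic and combinatorial. One minor stylistic point: the contradiction wrapper is unnecessary, since your derivation of $T_a(n)\geq \tfrac12 C_p^2\ln n$ only uses the hypothesis $T_a(n)<\rho\ln n$ to ensure $a^+\neq a$, and that case can be disposed of directly (if $a^+=a$ then $T_a(n)\geq n/|A|$, which already dominates $\lceil\rho\ln n\rceil$ for large $n$). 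Presenting it as a direct lower bound would be cleaner, but the mathematics is sound either way.
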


Then, the authors also prove a tail inequality similar to the one described
in the drift conditions, but on the random variable $\overline X_{n}$
instead of $\overline X_{a,n}$.
\begin{theorem}[{\cite[Theorem~4]{DBLP:conf/ecml/KocsisS06}}]\label{thm:drift-conditions}
	Fix an arbitrary $\delta>0$ and let $\Delta_n=9\sqrt{2n\ln(2/\delta)}$.
  Let $n_0$ be such that $\sqrt{n_0}\geq O(|A|(C^2_p\ln n_0+N_0(1/2))$.
  Then under the drift conditions, for any $n\geq n_0$, the following holds true:
	$$\Pr[n\overline X_n\geq n\E[\overline{X}_n] +\Delta_n(\delta)]\leq \delta$$
	$$\Pr[n\overline X_n\leq n\E[\overline{X}_n]-\Delta_n(\delta)]\leq \delta$$
\end{theorem}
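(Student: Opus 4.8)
The plan is to control the total accumulated payoff $n\overline X_n=\sum_{a\in A}\sum_{t=1}^{T_a(n)}X_{a,t}$ by isolating the contribution of the (unique) optimal action $j$, exploiting that under UCB1 every suboptimal action is played only $O(\ln n)$ times. Writing $S_a(n)=\sum_{t=1}^{T_a(n)}X_{a,t}$ for the payoff collected from $a$, we have $n\overline X_n-n\E[\overline X_n]=\sum_{a\in A}\bigl(S_a(n)-\E[S_a(n)]\bigr)$, so it suffices to bound each summand on one good event and then union-bound; a direct concentration bound on the observed payoff sequence is not available because the per-step means are non-stationary and the action choices adaptive, which is exactly why we route through the per-action drift inequalities.

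The crux is the random, payoff-dependent counter $T_j(n)$. The device I would use is that the payoff sequence $(X_{j,t})_{t\ge 1}$ is defined for \emph{all} indices, so that $\sum_{t=1}^{n}X_{j,t}$ has a \emph{deterministic} number of terms and the tail inequality of the drift conditions (Definition~\ref{def:drift-app}) applies to it directly, since $n\ge n_0\ge N_p$: with probability at least $1-2\delta'$ one has $|\sum_{t=1}^{n}X_{j,t}-n\mu_{j,n}|\le\Delta_n(\delta')$. Since $S_j(n)=\sum_{t=1}^{n}X_{j,t}-\sum_{t=T_j(n)+1}^{n}X_{j,t}$ and the removed block has $n-T_j(n)=\sum_{a\ne j}T_a(n)$ terms in $[0,1]$, this gives $|S_j(n)-n\mu_{j,n}|\le\Delta_n(\delta')+\sum_{a\ne j}T_a(n)$ on the good event, while $|\E[S_j(n)]-n\mu_{j,n}|\le\sum_{a\ne j}\E[T_a(n)]=O(|A|\ln n)$ by Theorem~\ref{thm:suboptimal-upper-bound} (using $\E[\sum_{t=1}^n X_{j,t}]=n\mu_{j,n}$). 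For each suboptimal action $a$ we simply bound $|S_a(n)-\E[S_a(n)]|\le T_a(n)+\E[T_a(n)]$, again $O(\ln n)$ in expectation by Theorem~\ref{thm:suboptimal-upper-bound}. Summing over $a$, on the good event $|n\overline X_n-n\E[\overline X_n]|\le\Delta_n(\delta')+2\sum_{a\ne j}T_a(n)+O(|A|\ln n)$; adding to the good event a high-probability (not merely in-expectation) bound $\sum_{a\ne j}T_a(n)=O(|A|\ln n)$, which comes out of the concentration-of-bad-steps argument underlying Theorem~\ref{thm:suboptimal-upper-bound}, this becomes $|n\overline X_n-n\E[\overline X_n]|\le\Delta_n(\delta')+O(|A|\ln n)$.

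It then remains to allocate the failure budget across the handful of events constituting the good event so that its complement has probability at most $\delta$, and to check that the residual $O(|A|\ln n)$ terms, together with the slack encoded in the constant in $\Delta_n$, are enough to reach the stated inequality once $\sqrt{n_0}\ge O(|A|(C_p^2\ln n_0+N_0(1/2)))$ — this is precisely the role of the threshold $n\ge n_0$, and it is the most delicate bookkeeping in the whole argument; the lower-tail bound follows symmetrically. The main obstacle is thus the coupling between the random counters $T_a(n)$ and the payoffs, combined with the fact that the drift inequalities are stated only for sums over a \emph{fixed} number of plays: padding the optimal action's partial sum out to exactly $n$ terms resolves it, at the price of the $O(\ln n)$ padding error whose size is controlled by the bound on suboptimal plays. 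Turning Theorem~\ref{thm:suboptimal-upper-bound}'s in-expectation bound into the high-probability bound needed here, and the exact constant accounting, are the two points where I would expect to have to work hardest.
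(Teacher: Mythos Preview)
The paper does not prove this statement: it is quoted verbatim as \cite[Theorem~4]{DBLP:conf/ecml/KocsisS06} in Appendix~\ref{app:UCB}, with no accompanying proof, and is then used as a black box in the inductive step of Lemma~\ref{lm:MAB}. There is therefore no ``paper's own proof'' to compare against.

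That said, your sketch is a reasonable reconstruction of the argument in the cited reference. The decomposition $n\overline X_n-n\E[\overline X_n]=\sum_a(S_a(n)-\E[S_a(n)])$, the padding trick that extends the optimal action's partial sum from $T_j(n)$ to a deterministic $n$ terms so that the drift tail inequality applies, and the absorption of the $O(|A|\ln n)$ residuals into the slack provided by the constant $9$ once $n\ge n_0$ are all essentially how Kocsis and Szepesv\'ari proceed. You are also right to flag the two genuine pressure points: upgrading Theorem~\ref{thm:suboptimal-upper-bound} from an expectation bound on $T_a(n)$ to a high-probability bound (in the original this falls out of the same event analysis that yields the expectation bound), and the constant accounting that ties $n_0$ to $|A|(C_p^2\ln n_0+N_0(1/2))$. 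Neither is a gap in your plan, just places where the details are fiddly.
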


Finally, the authors argue that the probability of making the wrong decision
(choosing a suboptimal action) converges to $0$ as the number of plays grows:
\begin{theorem}[{\cite[Theorem~5]{DBLP:conf/ecml/KocsisS06}}]\label{thm:failure-prob}
	Let $I_t$ be the action chosen at time $t$, and let $a^*$ be the optimal action.
	Then $\lim_{t \to\infty}Pr(I_t\neq a^*)=0$.
\end{theorem}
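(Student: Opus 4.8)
The plan is to mirror, in the non-stationary setting, the standard argument that UCB1 eventually identifies the best arm, substituting the tail inequalities of the drift conditions for the usual Hoeffding bound and relying on the logarithmic lower bound on visit counts from Theorem~\ref{thm:lower-bound}. Write $a^*$ for the (assumed unique) optimal action and set $\Delta=\min_{a\neq a^*}(\mu^*-\mu_a)>0$. It suffices to show that the probability that the \emph{empirically best} action at time $n$, namely $\argmax_a \overline X_{a,T_a(n)}$, is suboptimal tends to $0$: this is exactly what is needed at the root of the search tree in Theorem~\ref{thm:MCTS}, and the additional reduction required when $I_n$ denotes instead the action actually \emph{pulled} at step $n$ is addressed at the end. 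By a union bound over suboptimal actions, a suboptimal $a$ can beat $a^*$ empirically at time $n$ only if one of the events $\mathcal A_n:=\{\overline X_{a^*,T_{a^*}(n)}\leq \mu^*-\Delta/2\}$ or $\mathcal B^a_n:=\{\overline X_{a,T_a(n)}\geq \mu_a+\Delta/2\}$ occurs, using $\mu^*-\Delta/2\geq\mu_a+\Delta/2$ for every suboptimal $a$.

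Next I would bound $\Pr[\mathcal A_n]$ and $\Pr[\mathcal B^a_n]$ using Theorem~\ref{thm:lower-bound}: for $n$ large enough, $T_b(n)\geq\lceil\rho\ln n\rceil$ exceeds both $N_p$ and the threshold past which $\mu_{b,s}$ stays within $\Delta/4$ of $\mu_b$ (first drift condition), for every action $b$. Hence every empirical average in play is computed from at least $\rho\ln n$ samples and has bias below $\Delta/4$. A union bound over the actual value $s=T_b(n)\in[\lceil\rho\ln n\rceil,n]$, combined with the drift-condition tail inequality applied with $\delta$ chosen so that $\Delta_s(\delta)/s=\Delta/4$, i.e.\ $\delta=\exp(-s\Delta^2/(16C_p^2))$, gives $\Pr[\mathcal A_n],\Pr[\mathcal B^a_n]\leq\sum_{s\geq\lceil\rho\ln n\rceil}\exp(-s\Delta^2/(16C_p^2))$. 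The key point is that this is a geometric series, hence at most a constant multiple of its first term $n^{-\rho\Delta^2/(16C_p^2)}$; a naive union bound over a range of length $n$ would cost an extra factor $n$, but the geometric decay absorbs it, so each probability is $O(n^{-c})$ for a fixed $c>0$ and tends to $0$. Summing over the finitely many suboptimal actions concludes this case.

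Finally, for the action $I_n$ actually selected by the UCB rule: $\{I_n=a\}$ with $a$ suboptimal forces the index of $a$ to dominate that of $a^*$, so, dropping the nonnegative bonus of $a^*$, $\overline X_{a,T_a(n)}+2C_p\sqrt{\ln n/T_a(n)}\geq \overline X_{a^*,T_{a^*}(n)}$, which on the complement of $\mathcal A_n$ forces $\overline X_{a,T_a(n)}+2C_p\sqrt{\ln n/T_a(n)}>\mu_a+\Delta/2$. One then splits on the size of $T_a(n)$: when $T_a(n)$ is at least a suitable multiple of $\ln n$ the bonus is below $\Delta/4$ and the drift-condition tail bound of the previous paragraph applies, while when $T_a(n)$ is only of order $\ln n$ the bonus need not be small and one instead controls, as in the UCB1 analysis of Auer, Cesa-Bianchi and Fischer, how often a suboptimal action can be pulled while its visit count is still that small, using Theorems~\ref{thm:suboptimal-upper-bound} and~\ref{thm:lower-bound} to confine $T_a(n)$ to a logarithmic window. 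I expect this last regime to be the main obstacle: since the exploration bonus does not vanish there, rarity of a wrong pull cannot come from concentration alone and must be extracted from a counting argument on the number of under-sampled suboptimal pulls.
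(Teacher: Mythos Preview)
The paper does not contain a proof of this statement: Appendix~\ref{app:UCB} merely \emph{recalls} Theorems~1--5 of \cite{DBLP:conf/ecml/KocsisS06} and then uses them as black boxes in the analysis of Theorem~\ref{thm:MCTS}. There is therefore no proof in the manuscript to compare your proposal against.

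Evaluating your proposal on its own: the first half, showing that the \emph{empirically best} arm $\argmax_a\overline X_{a,T_a(n)}$ is suboptimal with vanishing probability, is correct. The deterministic lower bound $T_b(n)\geq\lceil\rho\ln n\rceil$ of Theorem~\ref{thm:lower-bound} restricts the union bound over $s=T_b(n)$ to $s\geq\rho\ln n$, the first drift condition lets you replace $\mu_b$ by $\mu_{b,s}$ at the cost of $\Delta/4$, and the tail inequality of Definition~\ref{def:drift} then yields a geometric series summing to $O(n^{-\rho\Delta^2/(16C_p^2)})$. You are also right that this is precisely the statement actually used at the root of the search tree in Theorem~\ref{thm:MCTS}, where the recommended move is $\argmax_a\mctsvalue_n(s_0,a)$ rather than the UCB pull.

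The gap you flag in the last paragraph, for the action $I_n$ actually pulled by UCB, is genuine and is not closed by the ingredients you list. Confining $T_a(n)$ to a logarithmic window via Theorems~\ref{thm:suboptimal-upper-bound} and~\ref{thm:lower-bound} gives no pointwise control on $\Pr[I_n=a]$: a sequence $(p_t)$ with $\sum_{t\le n}p_t=\E[T_a(n)]=O(\ln n)$ need not satisfy $p_t\to 0$ (take $p_t=1$ on powers of two and $0$ elsewhere), so ``rarity cannot come from a counting argument'' in the way you hope. The argument in \cite{DBLP:conf/ecml/KocsisS06} for this regime is not reproduced in the present paper, so you would have to consult the original to fill the gap; but, as noted, only your first half is required for the results the paper actually proves.
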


\section{MCTS with Simulation}

After $n$ iterations of MCTS, we have $\total_n(p)=\sum_{i\mid I_i(p)\leq n}\reward_{I_i(p)}(p)$ and $\total(p,a)=\sum_{i\mid I_i(p,a)\leq n}\reward_{I_i(p,a)}(p,a)$.

We use the following observations, derived from the structure of the MCTS algorithm. For all nodes $p$
in the search tree, after $n$ iterations, we have:
\begin{align*}
\total_n(p)&=\reward_{\I_1(p)}(p)+\sum_{a\in A}\total_n(p,a)\\
\total_n(p,a)&=\sum_{s\in\Supp(P(\last(p),a)}\total_n(p\cdot as)+R(\last(p),a)\cdot \numsamples_n(p,a)\\
\mctsvalue_n(p)&=\frac{\total_n(p)}{\numsamples_n(p)}\\
\numsamples_n(p)&= 1+\sum_a\numsamples_n(p,a)\\
\numsamples_n(p,a)&= \sum_s\numsamples_n(p\cdot as)
\end{align*}

In the following proof we will abuse notations slightly and conflate the variables and counters used in MCTS
with their associated random variables, \textit{e.g.}~we write $\E[\mctsvalue_n(s_0)]$ instead of
$\E[V_n(s_0)]$ with $V_n(s_0)$ a random variable that represents the value $\mctsvalue_n(s_0)$.
\begin{proof}[Proof of Lemma~\ref{lm:MAB}] %
We use the following inequality (Chernoff-Hoeffding inequality){\cite[Theorem~2]{doi:10.1080/01621459.1963.10500830}} throughout the proof:

  Let $X_1,X_2,\dots X_n$ be independent random variables in $[0,1]$. Let $S_n=\sum_nX_i$. Then for all $a>0$, $\Pr\Big[S_n\geq \E[S_n]+t\Big]\leq \exp{\left(-\frac{2t^2}{n}\right)}$ and $\Pr\Big[S_n\leq \E[S_n]-t\Big]\leq \exp{\left(-\frac{2t^2}{n}\right)}$.

We need to show that the following conditions hold:
  \begin{enumerate}
  	\item $\lim_{\numsamples_n(p)\to\infty}\E[\mctsvalue_n(p,a)]$ exists for all $a$.
  	\item\label{it:tail} There exists a constant $C_p>0$ such that for $\numsamples_n(p,a)$ big enough and any $\delta>0$, $\Delta_{\numsamples_n(p,a)}(\delta)=C_p\sqrt{\numsamples_n(p,a)\ln(1/\delta)}$, the following bounds hold:

  	$$\Pr\Big[\total_n(p,a)\geq \E[\total_n(p,a)] +\Delta_{\numsamples_n(p,a)}(\delta)\Big]\leq \delta$$
  	$$\Pr\Big[\total_n(p,a)\leq \E[\total_n(p,a)] -\Delta_{\numsamples_n(p,a)}(\delta)\Big]\leq \delta$$
  \end{enumerate}
 We show it by induction on $H-|p|$.

  For $|P|=H-1$: $\reward_i(p,a)$ follows a stationary distribution: $\reward_i(p,a)=R(\last(p),a)+R_T(s)$ with probability $P(\last(p),a)(s)$. Thus
\begin{align*}
\E[\total_n(p,a)]&=\E\left[\sum_{i\mid I_i(p,a)\leq n}\reward_{I_i(p,a)}(p,a)\right]\\
&=\numsamples_n(p,a)\left(\sum_sR_T(s)P(\last(p),a)(s)+R(\last(p),a)\right)\,.
\end{align*}
Thus $\E[\mctsvalue_n(p,a)]=\sum_sR_T(s)P(\last(p),a)(s)+R(\last(p),a)$.

From the Chernoff-Hoeffding inequality,
    $$\Pr\left[\sum_{i\mid I_i(p,a)\leq n}\reward_{I_i(p,a)}(p,a)\geq
    \E\left[\sum_{i\mid I_i(p,a)\leq n}\reward_{I_i(p,a)}(p,a)\right] +
\sqrt{\frac{\numsamples_n(p,a)}{2}\ln{\frac{1}{\delta}}}
    \right]
    \leq \delta \,,$$
    $$\Pr\left[\sum_{i\mid I_i(p,a)\leq n}\reward_{I_i(p,a)}(p,a)\leq
    \E\left[\sum_{i\mid I_i(p,a)\leq n}\reward_{I_i(p,a)}(p,a)\right] -
\sqrt{\frac{\numsamples_n(p,a)}{2}\ln{\frac{1}{\delta}}}
    \right]
    \leq \delta\,.$$

    Therefore, condition~\ref{it:tail} also holds with $C_p = \frac{1}{\sqrt{2}}$.

  Assume that the conditions are true for all $p\cdot as$. Then, from Theorem~\ref{thm:bias-bound} we get:
  \begin{align*}
    &\left|\E\left[\frac{\sum_{a'}\total_n(pas,a')}{\sum_{a'}\numsamples_n(pas,a')}\right]-
    \lim_{\numsamples_n(p\cdot as)\to\infty}\E\left[\mctsvalue_n(p\cdot as,a^*)\right]\right|\\
    &\leq \left|\E\left[\mctsvalue_n(p\cdot as,a^*)\right]-
    \lim_{\numsamples_n(p\cdot as)\to\infty}\E\left[\mctsvalue_n(p\cdot as,a^*)\right]\right|+\mathcal O\left(\frac{\ln (\numsamples_n(p\cdot as)-1)}{\numsamples_n(p\cdot as)-1}\right)\,,
  \end{align*}
  where $a^*$ is the optimal action from $p\cdot as$. Now,
\begin{align*}
\lim_{\numsamples_n(p)\to\infty}\E[\mctsvalue_n(p\cdot as)]
&=\lim_{\numsamples_n(p)\to\infty}\E\left[\frac{\total_n(p\cdot as)}{\numsamples_n(p\cdot as)}\right]\\
&=\lim_{\numsamples_n(p)\to\infty}\E\left[\frac{\total_n(p\cdot as)-\reward_{\I_1(p)}(p\cdot as)}{\numsamples_n(p\cdot as)-1}\right]\\
&=\lim_{\numsamples_n(p)\to\infty}\E\left[\frac{\sum_{a'}\total_n(p\cdot as,a')}{\sum_{a'}\numsamples_n(p\cdot as,a')}\right]\,.
\end{align*}
Let $\lim_{\numsamples_n(p\cdot as)\to\infty}\E[\mctsvalue_n(p\cdot as,a^*)]$ be denoted by $\mu_{p\cdot as}$ (we know that this limit exists by the induction hypothesis).
From Theorem~\ref{thm:lower-bound}, we know that $\numsamples_n(p,a)\to \infty$ for all $a$ when $\numsamples_n(p)\to \infty$.
And as for all states $s$, state $s$ is chosen according to distribution $P(p,a)(s)$, $\numsamples_n(p\cdot as)\to \infty$ with probability $1$. Then,
  \begin{align*}
    \lim_{\numsamples_n(p)\to\infty}\E[\mctsvalue_n(p\cdot a)]&=\lim_{\numsamples_n(p)\to\infty}\E\left[\sum_s\mctsvalue_n(p\cdot as)\frac{\numsamples_n(p\cdot as)}{\numsamples_n(p,a)}+R(\last(p),a)\right]\\
    &=R(\last(p),a)+\sum_s\mu_{p\cdot as}\cdot P(\last(p),a)(s)\,.
  \end{align*}
So $\lim_{\numsamples_n(p)\to\infty}\E[\mctsvalue_n(p\cdot a)]$ exists.

  From Theorem~\ref{thm:drift-conditions}, when $\numsamples_n(p\cdot as)$ is big enough,
  for all $\delta>0$,
  $\Pr\Big[\sum_{a'}\total_n(pas,a')\geq \E[\sum_{a'}\total_n(pas,a')]+\Delta_1^s(\delta)\Big]\leq \frac{\delta}{2|S|}$ where $\Delta_1^s(\delta)=9\left(\sqrt{\numsamples_n(p\cdot as)\ln\left(\frac{4|S|}{\delta}\right)}\right)$.

Therefore $\Pr\Big[\total_n(p\cdot as)-\reward_{\I_1(p\cdot as)}(p\cdot as)\geq \E[\total_n(p\cdot as)-\reward_{\I_1(p\cdot as)}(p\cdot as)]+\Delta_1^s(\delta)\Big]\leq \frac{\delta}{2|S|}$.
  Also the random variable associated to $\reward_{\I_1(p\cdot as)}(p\cdot as)$ following a fixed stationary distribution $f(p)$ in $[0,1]$. So from the Chernoff-Hoeffding inequality,
  $\Pr\Big[\reward_{\I_1(p\cdot as)}(p\cdot as)\geq \E[\reward_{\I_1(p\cdot as)}(p\cdot as)]
  +\Delta_2(\delta)\Big]\leq \frac{\delta}{2|S|}$
  where $\Delta_2(\delta) = \frac{1}{\sqrt{2}}\left(\sqrt{\ln\left(\frac{2|S|}{\delta}\right)}\right)$.

  Now using the fact that for $n$ random variables $\{A_i\}_{i\leq n}$ and $n$ random variables $\{B_i\}_{i\leq n}$, $\Pr[\sum_iA_i\geq \sum_i B_i]\leq \sum_i \Pr[A_i\geq B_i]$, we get:
  \begin{align*}
    &\Pr\Big[\total_n(p\cdot as)\geq \E[\total_n(p\cdot as)]+\Delta_1^s(\delta)+\Delta_2(\delta)\Big]\\
    &\leq \Pr\Big[\total_n(p\cdot as)-\reward_{\I_1(p\cdot as)}(p\cdot as)\geq \E[\total_n(p\cdot as)-\reward_{\I_1(p\cdot as)}(p\cdot as)]+\Delta_1^s(\delta)\Big]\\
    &\quad \quad +\Pr\Big[\reward_{\I_1(p\cdot as)}(p\cdot as)\geq \E[\reward_{\I_1(p\cdot as)}(p\cdot as)]+\Delta_2(\delta)\Big]\leq \frac{\delta}{|S|}\,.
  \end{align*}
  As $\numsamples_n(p,a)\E[R(\last(p),a)]=\E[\numsamples_n(p,a)\cdot R(\last(p),a)]$,
  \begin{align*}
    &\Pr\left[\total_n(p,a)\geq \E[\total_n(p,a)]+\sum_s(\Delta_1^s(\delta)+\Delta_2(\delta))\right]\\
    &\leq \sum_s \Pr\Big[\total_n(p\cdot as)\geq \E[\total_n(p\cdot as)]+(\Delta_1^s(\delta)+\Delta_2(\delta))\Big]\leq \delta\,.
  \end{align*}

  Similarly, when $\numsamples_n(p\cdot as)$ is big enough, for all $\delta>0$ it holds that
  $\Pr\Big[\total_n(p\cdot as)-\reward_{\I_1(p\cdot as)}(p\cdot as)
  \leq \E[\total_n(p\cdot as)-\reward_{\I_1(p\cdot as)}(p\cdot as)]-\Delta_1^s(\delta)\Big]
  \leq \frac{\delta}{2|S|}$ and
  $\Pr\Big[\reward_{\I_1(p\cdot as)}(p\cdot as)\leq \E[\reward_{\I_1(p\cdot as)}(p\cdot as)]-\Delta_2(\delta)\Big]\leq \frac{\delta}{2|S|}$.

  Thus $\Pr\Big[\total_n(p,a)\leq \E[\total_n(p,a)]-\sum_s\left(\Delta_1^s(\delta)+\Delta_2(\delta)\right)\Big]\leq \delta$.

  As $\numsamples_n(p\cdot as)\leq \numsamples_n(p,a)$, there exists $C\in\N$ such that for $\numsamples_n(p\cdot as)$ big enough and for all $\delta>0$:
  \begin{align*}
    \sum_s(\Delta_1^s(\delta)+\Delta_2(\delta))&\leq
    C\sum_s\sqrt{\numsamples_n(p\cdot as)\ln\left(\frac{1}{\delta}\right)}\\
    &\leq C\sum_s\sqrt{\numsamples_n(p,a)\ln\left(\frac{1}{\delta}\right)}\\
    &\leq C |S| \sqrt{\numsamples_n(p,a)\ln\left(\frac{1}{\delta}\right)}
  \end{align*}
  So, there is a constant $C_p$ such that for $\numsamples_n(p,a)$ big enough and any $\delta>0$,
  it holds that $\Delta_{\numsamples_n(p,a)}(\delta)=C_p\sqrt{{\numsamples_n(p,a)}\ln(1/\delta)}\geq \sum_s(\Delta_1^s(\delta)+\Delta_2(\delta))$. Therefore, the following bound hold:
  $\Pr\Big[\total_n(p,a)\geq \E[\total_n(p,a)] +\Delta_{\numsamples_n(p,a)}(\delta)\Big]$
  is upper bounded by $\Pr\Big[\total_n(p,a)\geq \E[\total_n(p,a)]+\sum_s(\Delta_1^s(\delta)+\Delta_2(\delta))\Big]$.
  It follows that $\Pr\Big[\total_n(p,a)\geq \E[\total_n(p,a)] +\Delta_{\numsamples_n(p,a)}(\delta)\Big]\leq \delta$.
  Similarly, $\Pr\Big[\total_n(p,a)\leq \E[\total_n(p,a)] -\Delta_{\numsamples_n(p,a)}(\delta)\Big]$
  is upper bounded by $\Pr\Big[\total_n(p,a)\leq \E[\total_n(p,a)]-\sum_s(\Delta_1^s(\delta)+\Delta_2(\delta))\Big]$. It follows that $\Pr\Big[\total_n(p,a)\leq \E[\total_n(p,a)] -\Delta_{\numsamples_n(p,a)}(\delta)\Big]\leq \delta$.

  This proves that for any $p$, the sequences $(x_{a,t})_{t\geq 1}$ associated with
  $\reward_{\I_t(p,a)}(p)$ satisfy the drift conditions.
\end{proof}

\end{document}